\newtheorem{thm}{Theorem}[section]
\newtheorem{cor}[thm]{Corollary}
\newtheorem{lem}[thm]{Lemma}
\theoremstyle{definition}
\newtheorem{defn}[thm]{Definition}
\theoremstyle{remark}
\newtheorem{rem}[thm]{Remark}
\numberwithin{equation}{section}
\newcommand{\RR}{\mathbb{R}}
\newcommand{\QQ}{\mathbb{Q}}
\newcommand{\PP}{\mathbb{P}}
\newcommand{\NN}{\mathbb{N}}
\newcommand{\cD}{\mathcal{D}}
\newcommand{\cE}{\mathcal{E}}
\newcommand{\cF}{\mathcal{F}}
\newcommand{\pp}{\textup{\texttt{+}}}
\newcommand{\mm}{\textup{\texttt{-}}}
\newcommand{\set}[1]{\left\{#1\right\}}
\newcommand{\Ind}[1]{\mathbf{1}_{\left\{#1\right\}}}
\newcommand{\Excond}[3]{\mathbb{E}^{#1}\left[\left.#2\right|#3\right]}  
\newcommand{\sseq}{\mathsf{sign}\textrm{-}\mathsf{seq}}
\newcommand{\sise}[1]{\bm{[}#1\bm{]}}
\newcommand{\head}{\overset{\scriptscriptstyle{H}}{\subseteq}}
\newcommand{\tail}{\overset{\scriptscriptstyle{T}}{\subseteq}}
\newcommand{\normal}{\texttt{normal}}
\newcommand{\humped}{\texttt{humped}}
\newcommand{\inverse}{\texttt{inverse}}
\newcommand{\dip}{\texttt{dipped}}
\newcommand{\hudi}{\texttt{HD}}
\newcommand{\dihu}{\texttt{DH}}
\newcommand{\HDH}{\texttt{HDH}}
\newcommand{\DHD}{\texttt{DHD}}
\newcommand{\HDHD}{\texttt{HDHD}}
\newcounter{mycounter}
\begin{document}

\title[Term structure shapes in the two-factor Vasicek model]{The classification of term structure shapes in the two-factor Vasicek model -- a total positivity approach}%
\author{Martin Keller-Ressel}%
\address{Institute for Mathematical Stochastics, TU Dresden}%
\email{martin.keller-ressel@tu-dresden.de}
\keywords{yield curve, forward curve, term structure, Vasicek model, interest rates, total positivity, Descartes systems}
\date{\today}

\begin{abstract}
We provide a full classification of all attainable term structure shapes in the two-factor Vasicek model of interest rates. In particular, we show that the shapes normal, inverse, humped, dipped and hump-dip are always attainable. In certain parameter regimes up to four additional shapes can be produced. Our results apply to both forward and yield curves and show that the correlation and the difference in mean-reversion speeds of the two factor processes play a key role in determining the scope of attainable shapes. The key mathematical tool is the theory of total positivity, pioneered by Samuel Karlin and others in the 1950ies.
\end{abstract}
\maketitle

\tableofcontents

\section{Introduction}
The term structure of interest rates -- summarized in the form of the yield or forward curve -- is one of the most fundamental economic indicators. Its shape encodes important information on the preferences for short- vs. long-term investments, the desire for liquidity and on expectations of central bank decisions and the general economic outlook. It is therefore a natural question, which shapes of yield and forward curves a given mathematical model of interest rates is able to (re-)produce. 
Already in \cite{vasicek1977equilibrium} a paragraph is dedicated to this question, with Vasicek concluding that normal (increasing), inverse (decreasing) and humped (endowed with a single maximum) shapes can be attained in his single-factor model. 
The same classification of shapes has been shown to hold in the Cox-Ingersoll-Ross model and furthermore in all one-dimensional affine term structure models (including short-rate models with jumps), see \cite[Eq.~(26)f]{cox1985theory}, \cite{keller-ressel2008yield, keller-ressel2018correction}.\\
For time-homogeneous \emph{multi-factor} models (such as the affine term structure models of \cite{dai2000specification}) there seems to be very little systematic knowledge on attainable term structure shapes. A notable exception is \cite{diez2019yield}, where it has been shown that the two-factor Vasicek model can also produce dipped curves, but without giving a complete enumeration of all other attainable shapes.\\
For time-inhomogeneous models, such as the Hull-White extended Vasicek model \cite{hull1990pricing}, it is well-known that \emph{any} initial term structure can be perfectly fitted and therefore that any shape of the term structure can be reproduced at the time of calibration. However, as time progresses, this initial shape will disappear and -- due to ergodicity effects -- the model will behave more and more like a time-homogeneous model. Therefore, even in view of Hull-White-extended models, the classification of attainable term structure shapes in time-homogeneous short-rate models remains a relevant question.\\
Here, we provide for the first time a full classification of term structure shapes in the two-factor Vasicek model. In our main result, Theorem~\ref{thm:main}, we classify all attainable shapes for both yield and forward curves.  As expected, several additional shapes, such as a dipped and a hump-dip curve, which are not attainable in the one-dimensional case, become attainable in the two-factor model. We strengthen and extend this main result in several ways: For many of the term structure shapes we can identify the exact region of the model's state space in which they occur. Moreover, we discuss which shapes are guaranteed to occur with strictly positive probability (`strict attainability') and for which shapes the locations of extrema can be arbitrarily prescribed (`strong attainability').

Our main mathematical tool is the theory of total positivity (see e.g. \cite{karlin1968total}), a theory linked to the variation-diminishing properties of certain matrices, function systems and integral kernels. Total positivity has broad applications in numerical interpolation, differential equations and stochastic processes. Within mathematical finance, it has been applied to study monotonicity and convexity of options prices \cite{kijima2002monotonicity} and to the principal-component-analysis of the term structure of interest rates \cite{salinelli2006correlation, lord2007level}. Our application to the shape analysis of the term structure is new and fundamentally different from the results in \cite{salinelli2006correlation, lord2007level}. While the results in this paper are limited to the two-dimensional Vasicek model, we are confident that the underlying theory can be applied to other multi-factor interest rate models as well.

\section{Notation and main result}
\subsection{Shapes of the term structure}

In our terminology \emph{term structure} refers to either the yield curve or the forward curve. The \emph{shape} $\mathsf{S}$ of the term structure is defined by the number and sequence of local maxima or minima of the term structure curve. In common financial market terminology a local maximum is called a `hump' and a local minimum a `dip'. As the term structure curves produced by the Vasicek model (or most other models) are smooth, it is clear that the shape of the term structure curve can be conveniently analyzed by considering its derivative: Any sign change of the derivative (from strictly positive to strictly negative or vice versa) corresponds to a local extremum of the term structure; the type of sign change ($\pp$ to $\mm$ or $\mm$ to $\pp$) determines the type of the extremum (hump or dip). The basic shapes and their conventional names are listed in Table~\ref{tab:shape}. For `higher order' shapes we use the letters \texttt{H} for a hump and \texttt{D} for a dip, e.g., the shape $\HDH{}$ corresponds to a term structure with two local maxima, interlaced by a single local minimum.
\setlength{\tabcolsep}{8pt}
\begin{table}[hbtp]\label{tab:shape}
\begin{center}
\begin{tabular}{p{3cm}p{5cm}p{3cm}} 
\toprule
Shape $\mathsf{S}$ of the term structure & Description & Sign sequence of derivative\\ 
\midrule 
\normal{} & strictly increasing & $\sise{\pp}$\\
\inverse{} & strictly descreasing & $\sise{\mm}$\\
\humped{} & single local maximum & $\sise{\pp \mm}$\\
\dip{} & single local minimum & $\sise{\mm \pp}$\\
\hudi{} & hump-dip, i.e. local maximum followed by local minimum & $\sise{\pp \mm \pp}$\\
\dihu{}, \HDH{}, etc. & further sequences of multiple `dips'  and `humps'  & $\sise{\dotsc}$ \\
\bottomrule
\end{tabular}
\end{center}
\caption{Shapes of the term structure}
\end{table}

\subsection{The two-factor Vasicek model} The Vasicek model, originally introduced by \cite{vasicek1977equilibrium} as a single-factor model, has been extended to multiple factors by \cite{dai2000specification} within the framework of affine term structure models. Here, we focus on the two-dimensional case, which has been treated in detail e.g. in \cite{brigo2007interest}. In the two-dimensional Vasicek model the short rate is given by
\[r_t = Z_t^1 + Z_t^2,\]
where the dynamics of the factor process $Z = (Z^1,Z^2)$ are given by 
\begin{align}\label{eq:vasicek}
dZ^i_t = -\lambda_i ( Z_t^i - \theta_i) \, dt + \sigma_i dB_t^i, \qquad i \in \set{1, 2}.
\end{align}
under the risk-neutral measure $\QQ$. 
The long-term rates $\theta = (\theta_1, \theta_2)$ are real and the Brownian motions $B^1, B^2$ have correlation $\rho \in [-1,1]$. We assume that the mean-reversion speeds are strictly positive and ordered as
\[\lambda_1 < \lambda_2,\]
i.e. $Z^1$ is the `slow' factor with predominate influence on the long end of the term structure, and $Z^2$ the `fast' factor with predominate influence on short-term rates.  
From \cite{dai2000specification}, the bond price in the two-dimensional Vasicek model can be written as
\begin{equation}\label{eq:bond}
P(t,t+x) = \Excond{\QQ}{\exp\left(- \int_t^{t+x} r_s ds\right)}{\cF_t} = \exp \left(A(x) + Z_t^\top B(x)\right) 
\end{equation}
where $A$ and $B$ are given as solutions of the ODEs
\begin{subequations}\label{eq:Riccati}
\begin{align}
A'(x) &= F(B(x)), \qquad A(0) = 0\label{eq:Riccati_A}\\
B'_i(x) &= R_i(B_i(s)), \qquad B_i(0) = 0, \qquad i \in \set{1,2}\label{eq:Riccati_B}
\end{align}
\end{subequations}
with 
\begin{subequations}\label{eq:FR_def}
\begin{align}
F(b) &= \lambda_1 \theta_1 b_1 + \lambda_2 \theta_2 b_2 + \frac{1}{2} b^\top \Sigma b, \qquad \quad \Sigma = \begin{pmatrix} \sigma_1^2 & \rho \sigma_1 \sigma_2 \\ \rho \sigma_1 \sigma_2 & \sigma_2^2 \end{pmatrix},\label{eq:F_def}\\
R_i(b) &= -\lambda_i b_i - 1, \qquad i \in \set{1,2}.\label{eq:R_def}
\end{align}
\end{subequations}
The differential equations \eqref{eq:Riccati_B} can obviously be solved explicitly with solutions given by
\[B_i(x) = \frac{1}{\lambda_i} \left(e^{-\lambda_i x} - 1\right), \quad i \in \set{1,2}.\]
By integration, the explicit solution of \eqref{eq:Riccati_A} is given by
\begin{align*}
A(x) &= \frac{\sigma_1^2}{4 \lambda_1^3} \left(e^{-2\lambda_1 x} + 4 e^{-\lambda_1 x} - 2\lambda_1 x + 3\right) +  \frac{\sigma_2^2}{4 \lambda_2^3} \left(e^{-2\lambda_2 x} + 4 e^{-\lambda_2 x} - 2\lambda_2 x + 3\right) \\
&+ \frac{\sigma_1 \sigma_2}{\lambda_1 \lambda_2}\left(\frac{e^{-(\lambda_1 + \lambda_2) x}}{\lambda_1 + \lambda_2} - \frac{e^{-\lambda_1 x}}{\lambda_1} -\frac{e^{-\lambda_2 x}}{\lambda_2} - x\right).
\end{align*}

Finally, the yield and forward curves in the Vasicek model are easily computed from \eqref{eq:bond} and \eqref{eq:Riccati} as
\begin{align}
f(x;Z_t) &= -\partial_x \log P(t,t+x) = -A'(x) - Z_t^\top B'(x)\label{eq:forward},\\
Y(x;Z_t) &= -\frac{1}{x} \log P(t,t+x) = - \frac{A(x)}{x} - Z_t^\top \frac{B(x)}{x}. \label{eq:yield}
\end{align}
If we want to emphasize the dependency of these curves on some parameter $p$ in addition to the state vector $z$, we write $f(x;z,p)$ and $Y(x;z,p)$. We use the analogous notation for all quantities derived from $f$ and $Y$.\\

\subsection{Classification of term structure shapes}
We are now prepared to present the main result of this paper; the classification of term structure shapes in the two-factor Vasicek model. We denote by $\bm{P}$ the full parameter space of the two-dimensional Vasicek model, i.e. 
\[\bm{P} = \set{\begin{pmatrix}\theta_1\\\theta_2\end{pmatrix} \in \RR^2, \begin{pmatrix}\sigma_1\\\sigma_2\end{pmatrix} \in [0,\infty)^2, \rho \in [-1,1], 0 < \lambda_1 < \lambda_2},\]
and introduce the following definition:
\begin{defn}[Attainability]\label{def:attainable}
A shape $\mathsf{S}$ of the forward curve is called \textbf{attainable}, if we can find a parameter vector $p \in \bm{P}$ and a state vector $z \in \RR^2$, such that $x \mapsto f(x; z, p)$ has shape $\mathsf{S}$. The same definition applies to the yield curve $x \mapsto Y(x; z, p)$.
\end{defn}
Moreover, the relation of the two speed-of-mean-reversion parameters $\lambda_1 < \lambda_2$, is distinguished as follows:
\begin{defn}The two-dimensional Vasicek model is called 
\begin{itemize}
\item \textbf{scale-separated}, if $2 \lambda_1 < \lambda_2$,
\item \textbf{scale-proximal}, if $2 \lambda_1 > \lambda_2$, and
\item \textbf{scale-critical}, if $2 \lambda_1 = \lambda_2$.
\end{itemize}
\end{defn}

\begin{thm}\label{thm:main}
Consider the two-dimensional Vasicek model.
\begin{enumerate}[(a)]
\item In the scale-separated case, the following yield and forward curve shapes are attainable: 
\begin{quote}
\normal{}, \inverse{}, \humped{}, \dip{}, \hudi{}, \dihu{},  \HDH{};
\end{quote}
no other shapes are attainable.
\item In the scale-proximal case, the following yield and forward curve shapes are attainable with $\rho \ge 0$:
\begin{quote}
\normal{}, \inverse{}, \humped{}, \dip{}, \hudi{};
\end{quote}
no other shapes are attainable with $\rho \ge 0$.
\item In the scale-proximal case, the following yield and forward curve shapes are attainable with $\rho < 0$:
\begin{quote}
\normal{}, \inverse{}, \humped{}, \dip{}, \hudi{}, \dihu{},  \HDH{}, \DHD{}, \HDHD{};
\end{quote}
no other shapes are attainable with $\rho < 0$.
\end{enumerate}
In the scale-critical case, (b) applies if $\rho \ge 0$ and (a) applies if $\rho < 0$. 
\end{thm}
We observe hat compared to the one-dimensional Vasicek model, in which only the three shapes \normal{}, \inverse{}, and \humped{} can be produced (see \cite{vasicek1977equilibrium, keller-ressel2008yield, korn2013optionsbewertung}), at least the two additional shapes \dip{} and \hudi{} are attainable in the two-factor case. Depending on the relation of the speed-of-mean-reversion parameters and on the correlation $\rho$ the number of additional shapes can grow up to six. 
The proof of the theorem is is given in section~\ref{sec:proof}. It is based on the theory of total positivity and Descartes systems, which is summarized in section~\ref{sec:totpos}).
In section~\ref{sec:additional}, the main result is refined and extended in several ways: Firstly, the analysis of attainable term structure shapes can also be carried out \emph{contingent on the state vector $(Z_t^1, Z_t^2)$}. This allows to partition the state space into regions in which only a few or even a single shape is possible; see Figure~\ref{fig:1} for an illustration of the scale-proximal and positively correlated case. Secondly, we introduce and discuss the notions of \emph{strict} and \emph{strong} attainability, which essentially correspond to the attainability of shapes with strictly positive probability and with arbitrary placement of local extrema. Finally, we show in section~\ref{sec:additional} that all attainable shapes can be produced by just varying the state vector and the volatility parameters $\sigma_1 ,\sigma_2, \rho$, while keeping all other model parameters fixed.

\begin{figure}[!p]
\centering
\includegraphics[width=0.65\textwidth]{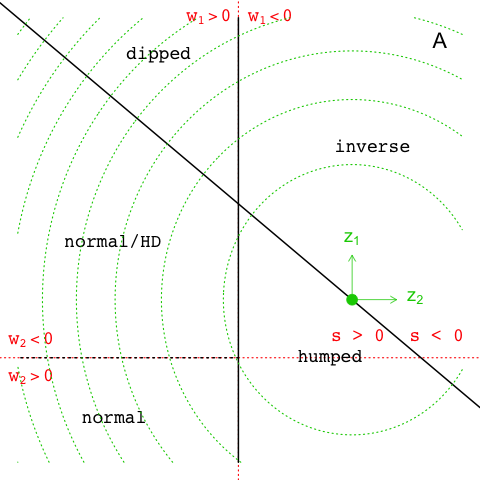}
\vspace{2em}
\includegraphics[width=0.65\textwidth]{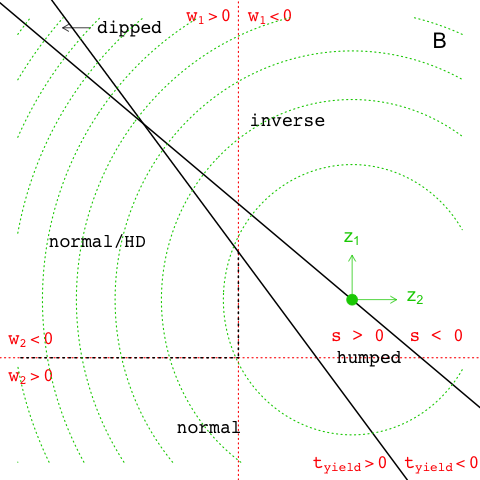}
\caption{\textbf{State-contingent term structure shapes in the two-dimensional Vasicek model.} The attainable shapes of the forward curve (panel A) and the yield curve (panel B) in different regions (delimited in black) of the state space $\RR^2$ of the factors $(z_1,z_2)$ are shown. The underlying model is assumed to be scale-proximal and positively correlated, see case (b) of Theorem~\ref{thm:main}.  The contour lines of the risk-neutral stationary distribution of the factor process are shown in green, and the green dot shows the location of the `most likely' yield/forward curve. See section~\ref{sec:state} for details on the annotation in red.}
\label{fig:1}
\end{figure}

\section{Sign sequences, total positivity and Descartes systems}\label{sec:totpos}

\subsection{Sign sequences}
In order  to keep track of the number and the directions of sign changes of a numeric sequence or of a continuous function we introduce the notion of a \emph{sign sequence}. While this notion appears implicitly in many of the results related to total positivity, the exact terminology and notation introduced here is new.
\begin{enumerate}[(i)]
\item A \textbf{sign sequence} is a non-empty sequence of the symbols $\pp$ and $\mm$. Only finite sign sequences will be considered here. Also zeroes can be allowed; we comment on this later. We include sign sequences in square brackets and write e.g.
\[\sise{\pp}, \qquad \sise{\pp\pp\mm\mm\pp}, \qquad \sise{\pp\mm\pp}\]
for some valid sign sequences. 
\item Two sign sequences are \textbf{equivalent}, if the number and direction of their sign changes is the same. This defines an equivalence relation $\simeq$, e.g., 
\[\sise{\pp\pp\mm\mm\pp} \simeq \sise{\pp\mm\pp\pp\pp}.\]
\item In a similar way we can define a \textbf{subsequence} relation $\subseteq$ in which only the sign changes are considered (i.e. we treat blocks of signs as if they were single signs). Thus, we have
\[\sise{\mm\mm\pp\pp\pp} \subseteq \sise{\mm\pp\mm\mm}, \qquad \sise{\mm} \subseteq \sise{\pp\pp\mm\pp}.\]
\item A subsequence which also preserves the initial sign is called a \textbf{head} and a subsequence which preserves the terminal sign is called a \textbf{tail}. We write
\[\sise{\mm\mm\pp\pp\pp}\head \sise{\mm\pp\mm\mm}, \qquad \sise{\pp}\tail \sise{\mm\mm\pp}\]
for the respective relations.
\item Sign sequences should only keep track of `strong' sign changes.\footnote{A sign change from $\pp$ to $0$ and back to $\pp$, for example, is not considered a strong sign change, whereas a sign change from $\pp$ to $0$ and then to $\mm$ is.} Therefore we add the convention that \textbf{zeroes} in sign sequences can simply be omitted to obtain an equivalent sign sequence. E.g. we have
\[\sise{\pp0\pp\pp\mm0\pp} \simeq \sise{\pp\mm\pp}, \qquad \sise{0\mm00\mm} \simeq \sise{\mm}.\]
Note that all strong sign changes (and their direction) are preserved under this reduction.
\item If a variable, say $a$, appears inside a sign sequence, it should be interpreted as `sign of $a$'. E.g. the sign sequence $\sise{ab}$ evaluates to $\sise{\pp\mm}$ if $a = 6$ and $b = -1$ and to $\sise{\mm}$ if $a = -1$, $b = 0$.
\item Let $f$ be a continuous function, defined on a subset $X$ of $\RR$ and not constantly zero. The \textbf{sign sequence of $f$} is the sequence of signs that $f$ takes on between its zeroes. Only functions with finite sign sequences will be considered and we denote the sign sequence of such a function $f$ by $\sseq(f)$. For example
\[f(x) = x^2 -1\,\text{, defined on $X = [0,\infty)$} \qquad \Longrightarrow \qquad \sseq(f) = \sise{\mm\pp}.\]
\end{enumerate}

\subsection{Total positivity and Descartes systems}
\label{sec:tot_pos}
We introduce some definitions and key results from the theory of total positivity. For background and further details we refer to \cite{karlin1966tchebycheff, karlin1968total} and \cite{borwein1995polynomials}.

 \begin{defn}[Totally positive kernel] Let $X, Y \subseteq \RR$ and let $K$ be a function (`kernel') from $X \times Y$ to $\RR$. If 
\begin{equation}\label{eq:K}
\det \begin{pmatrix} K(x_1,y_1) & K(x_1, y_2) & \dotsc &  K(x_1, y_m) \\ \vdots & \vdots && \vdots\\  K(x_m, y_1) &  K(x_m,y_2) & \dotsc &  K(x_m,y_m)\end{pmatrix} \ge 0
\end{equation}
for any $m \in \NN$, $x_1 < x_2 < \dotsm < x_m$ in $X$ and $i_1 < i_2  < \dotsm < i_m$ in $Y$, then $K(x,y)$ is called \textbf{totally positive}. If strict equality holds in \eqref{eq:K}, the kernel is called \textbf{strictly totally positive}.
\end{defn}
\begin{rem}
\begin{enumerate}[(i)]
\item The kernels $K(x,y) = e^{xy}$ and $K(x,y) = \Ind{y \le x}$ are examples of totally positive kernels on $\RR^2$ (or on any $X \times Y$ with $X, Y \subseteq \RR$); see \cite[Ch.1 \S 2]{karlin1968total} and \cite[Ch.3, Eq.~(1.13)ff]{karlin1968total}. The first kernel is even strictly totally positive.
\item A totally positive kernel on $X=Y= \set{1, \dotsc, n}$ can be written as a matrix; accordingly such matrices are also called totally positive, cf. \cite{ando1987totally} or \cite[Ch.29]{hogben2013handbook}.
\end{enumerate}
\end{rem}
A crucial property of totally positive kernels is the following:
 \begin{thm}[Variation-diminishing property of totally positive kernels]\label{thm:vardim_K}
Let $K$ be a totally positive kernel on $X \times Y$, such that $\int_Y K(x,y) dy < \infty$ for all $x \in X$. Let $f: Y \to \RR$ be a bounded continuous function with finite sign sequence and set
\[g(x) := \int_Y K(x,y) f(y) dy.\]
Then 
 \[\sseq(g)  \subseteq \sseq(f).\]
 \end{thm}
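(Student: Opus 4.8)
The plan is to reduce the continuous statement to the finite-dimensional variation-diminishing property of totally positive matrices, which itself rests on the Cauchy--Binet formula. First I would record that $g$ is well defined and finite, since $\abs{g(x)} \le \norm{f}_\infty \int_Y K(x,y)\,dy < \infty$, and continuous by dominated convergence, so that $\sseq(g)$ is meaningful once we know it is finite. Let $k$ be the number of strong sign changes of $f$, so that $\sseq(f)$ reduces to a pure alternating sequence $\sise{\epsilon_0 \epsilon_1 \cdots \epsilon_k}$ with $\epsilon_j \in \set{\pp,\mm}$ alternating in $j$. Since $f$ is continuous with finitely many strong sign changes, the order on $Y \subseteq \RR$ splits $Y$ into consecutive pieces $I_0 < I_1 < \cdots < I_k$ on which $f$ has constant sign $\epsilon_j$ (absorbing zeros into the boundaries). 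Identifying $\pp,\mm$ with $\pm 1$ and writing $c_j(x) := \int_{I_j} K(x,y)\,\abs{f(y)}\,dy \ge 0$, this yields the decomposition
\[
g(x) \;=\; \sum_{j=0}^{k} \epsilon_j\, c_j(x).
\]

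The crucial step is that sampling the $c_j$ at an increasing family of arguments produces a totally positive matrix. Fix $x_0 < x_1 < \cdots < x_{k+1}$ in $X$ and set $A := \bigl(c_j(x_i)\bigr)_{0 \le i \le k+1,\,0 \le j \le k}$. I claim $A$ is totally positive. Indeed, an arbitrary $p\times p$ minor on rows $x_{i_1}<\cdots<x_{i_p}$ and columns $j_1 < \cdots < j_p$ equals, by multilinearity of the determinant in its columns,
\[
\int_{Y^p} \det\bigl(K(x_{i_a},y_b)\bigr)_{a,b=1}^p \prod_{b=1}^p \abs{f(y_b)}\,\Ind{y_b \in I_{j_b}}\, dy_1\cdots dy_p .
\]
Because $j_1 < \cdots < j_p$, the supports $I_{j_1} < \cdots < I_{j_p}$ are increasing and disjoint, so on the region of integration we automatically have $y_1 < \cdots < y_p$; total positivity of $K$ then makes the integrand nonnegative, and the minor is $\ge 0$.

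With $A$ totally positive, sampling $g$ gives the vector $\bigl(g(x_i)\bigr)_{i=0}^{k+1} = A\,\bm{\epsilon}$, where $\bm\epsilon = (\epsilon_0,\dots,\epsilon_k)^{\!\top}$ strictly alternates and has exactly $k$ sign changes. Now I would invoke the finite-dimensional variation-diminishing property: the image of a vector under a totally positive matrix has at most as many sign changes as the vector, and in the case of equality it agrees with it in the leading and trailing signs. Concretely, if $g$ had $k+1$ sign changes one could choose the $x_i$ so that $g(x_0),\dots,g(x_{k+1})$ strictly alternate; taking the (up to scaling unique) left null vector $u$ of the $(k+2)\times(k+1)$ matrix $A$, whose entries are the signed maximal minors $u_i = (-1)^i M_i$ with $M_i \ge 0$ by total positivity, one obtains $0 = u^{\!\top} A\bm\epsilon = \sum_i M_i\,(-1)^i g(x_i)$, a sum of terms of one constant sign, forcing all $M_i = 0$ and contradicting full column rank of $A$. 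This bounds the number of sign changes; the sign-consistency half of the matrix theorem, applied in the same way, then pins down $S_\text{init}(g)$ and $S_\text{term}(g)$ whenever the counts coincide. Since the sampling points are arbitrary, a continuity argument transfers these constraints from samples of $g$ to $\sseq(g)$ itself, and together they are exactly the assertion $\sseq(g) \subseteq \sseq(f)$.

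The main obstacles are twofold. The determinantal identity in the second step is the integral form of the Cauchy--Binet formula and must be justified with attention to integrability (guaranteed by $\int_Y K(x,\cdot)\,dy < \infty$ and boundedness of $f$). The more delicate point is upgrading the bare count to the full subsequence relation $\subseteq$, which additionally records the directions of the sign changes and the boundary signs: here the sign-consistency refinement of the matrix theorem is indispensable, and the null-vector argument must be supplemented by a rank or perturbation step (for instance approximating $K$ by a strictly totally positive kernel) to rule out the degenerate case in which all relevant minors $M_i$ vanish. Finally, the passage from finitely many samples back to the continuous sign sequence of $g$ needs care so that only strong sign changes are counted and zeros of $g$ are handled according to the reduction conventions for sign sequences.
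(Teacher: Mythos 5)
The paper never proves Theorem~\ref{thm:vardim_K} at all: it is quoted directly from \cite[Ch.~5, Thm.~3.1]{karlin1968total}, so there is no in-paper argument to compare against, and your attempt is in effect a reconstruction of the cited source. Your skeleton is indeed the classical one: split $Y$ into ordered blocks $I_0 < \dotsm < I_k$ of constant sign of $f$, write $g = \sum_j \epsilon_j c_j$ with $c_j \ge 0$, and use the integral Cauchy--Binet (``basic composition'') formula to show that any sampled matrix $A = \bigl(c_j(x_i)\bigr)$ is totally positive, thereby reducing the claim to the finite-dimensional variation-diminishing theorem. That reduction, including the observation that $j_1 < \dotsm < j_p$ forces $y_1 < \dotsm < y_p$ on the domain of integration, is correct.

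The finite-dimensional step, however, which is the actual mathematical content, has a genuine hole as written. Your contradiction ends with: all maximal minors $M_i$ of the $(k+2)\times(k+1)$ matrix $A$ vanish, ``contradicting full column rank of $A$.'' Nothing guarantees full column rank. A totally positive kernel can be degenerate -- take $K(x,y) = a(x)b(y)$ with $a,b \ge 0$, for which every matrix $A$ you construct has rank one and all minors of order $\ge 2$ vanish identically -- so the conclusion ``all $M_i = 0$'' is not by itself absurd, and the argument stalls exactly there. The repair you gesture at (approximating by a strictly totally positive object) is not a footnote but the technical heart of the proof: one needs either Whitney's theorem that strictly totally positive matrices are dense in totally positive ones, after which the left-null-vector identity $\sum_i (-1)^i M_i^{(n)} (A_n \bm{\epsilon})_i = 0$ with $M_i^{(n)} > 0$ and strictly alternating $A_n\bm{\epsilon}$ gives the contradiction, or an equally nontrivial smoothing argument at the level of the kernel. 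The second gap: the relation $\sseq(g) \subseteq \sseq(f)$ exceeds a mere count bound precisely when the numbers of sign changes coincide, in which case the two pure sequences must be \emph{identical}, initial sign included. You correctly identify that this ``sign-consistency half'' is indispensable, but you invoke it as a black box from ``the matrix theorem'' rather than proving it -- in a blind proof that amounts to assuming the other half of the statement being proved. (A smaller issue: continuity of $g$ does not follow from dominated convergence unless $K(\cdot,y)$ is continuous, which the statement does not assume; Karlin's formulation sidesteps this by counting sign changes of arbitrary functions through finite samples, which is also all your sampling argument actually needs.)
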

This result is a particular case of \cite[Ch.~5, Thm.~3.1]{karlin1968total}, formulated in the language of sign sequences. It can be extended from integration with respect to Lebesgue measure $dy$ to a large class of $\sigma$-finite measures $d\mu(y)$ on $Y$. These extensions, however, will not be needed here.\\
 
 Next, we discuss a closely related definition, which applies to families of functions. 

\begin{defn}[Descartes system] Let $X$ be a subinterval of $\RR$ and let $\cD = (\phi_1, \dotsc, \phi_n)$ be a family of continuous functions from $X$ to $\RR$. If
\begin{equation}\label{eq:Phi}
\det \begin{pmatrix} \phi_{i_1}(x_1) & \phi_{i_2}(x_1) & \dotsc &  \phi_{i_m}(x_1)\\ \vdots & \vdots && \vdots\\  \phi_{i_1}(x_m) &  \phi_{i_2}(x_m) & \dotsc &  \phi_{i_m}(x_m)\end{pmatrix} > 0
\end{equation}
for any $m \le n$, $x_1 < x_2 < \dotsm < x_m$ in $X$ and $i_1 < i_2  < \dotsm < i_m$ in $\set{1, \dotsc, n}$, then 
$\cD$ is called a \textbf{Descartes system} on $X$.
\end{defn}
\begin{rem}\label{rem:Descartes}
\begin{enumerate}[(i)]
\item The order of the functions $\phi_1, \dotsc, \phi_n$ matters and a permutation of a Descartes system need not be a Descartes system.
\item A Descartes system can be seen as a strictly totally positive kernel on $X \times \set{1, \dotsc, n}$
\item The family of monomials $(1,x,x^2,x^3, \dotsc, x^n)$ is a Descartes system.
\item The family of exponential functions $(e^{x\gamma_1}, \dotsc, e^{x \gamma_n})$ is a Descartes system if and only if $\gamma_1 < \gamma_2 < \dotsm < \gamma_n$ \label{item:exp}
\end{enumerate}
\end{rem}
Also Descartes systems enjoy variation-diminishing properties:
\begin{thm}[Variation-diminishing property of Descartes systems]\label{thm:vardim}
Let $(\phi_1, \dots, \phi_n)$ be a Descartes system and let $(a_1, \dotsc, a_n) \in \RR^n$. Then
\begin{equation}\label{eq:vardim}
\sseq\left(\sum_{i=1}^n a_i \phi_i\right) \subseteq \sise{a_1 a_2, \dotsm a_n}.
\end{equation}
\end{thm}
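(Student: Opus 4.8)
The plan is to reduce the statement to a finite, determinant-based assertion about the evaluation matrix of the Descartes system and then to invoke the variation-diminishing property of strictly totally positive matrices, i.e.\ the discrete analogue of Theorem~\ref{thm:vardim_K}. First I would dispose of the degenerate features. Dropping every index $i$ with $a_i = 0$ leaves a subfamily which is again a Descartes system (its defining minors in \eqref{eq:Phi} form a subset of those for the full family), does not change $f = \sum_i a_i \phi_i$, and by convention (vi) does not change the right-hand side $\sise{a_1 a_2 \cdots a_n}$; so I may assume all $a_i \neq 0$. Taking $m = 1$ in \eqref{eq:Phi} shows $\phi_i > 0$ on $X$, and taking $m = n$ shows the $\phi_i$ are linearly independent, whence $f \not\equiv 0$. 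Granting for the moment that $\sseq(f)$ is finite (this follows from the count below), write $(\epsilon_0, \dots, \epsilon_k)$ for the pure reduction of $\sseq(f)$ and choose points $x_0 < \dots < x_k$ in $X$ with $\sign f(x_j) = \epsilon_j$. The crucial observation is that $\sseq(f) \subseteq \sise{a_1 \cdots a_n}$ is equivalent to the existence of indices $i_0 < i_1 < \dots < i_k$ with $\sign(a_{i_l}) = \epsilon_l$ for every $l$; this is the concrete combinatorial target.

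The key mechanism is a sign computation for a single determinant. For an arbitrary increasing $k$-tuple $(j_1, \dots, j_k)$ in $\set{1, \dots, n}$ I form the $(k+1) \times (k+1)$ matrix $V$ whose first column is $(f(x_0), \dots, f(x_k))$ and whose remaining columns are $(\phi_{j_s}(x_0), \dots, \phi_{j_s}(x_k))$. Expanding $\det V$ along the $f$-column gives $\det V = c \sum_{p=0}^k (-1)^p f(x_p) M_p$, where $c = \pm 1$ is fixed and each cofactor $M_p$ is a $k \times k$ minor of the Descartes system with strictly increasing rows and columns, hence $M_p > 0$. Since the $\epsilon_p$ alternate, $(-1)^p f(x_p) = \eta\,\abs{f(x_p)}$ for the constant $\eta := \epsilon_0$, so every summand carries the sign $c\eta$ and therefore $\sign \det V = c\eta$, \emph{independently of the chosen tuple} $(j_1, \dots, j_k)$. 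This is exactly the point at which total positivity enters.

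Expanding the same determinant the other way---by multilinearity in its $f$-column, using $f = \sum_i a_i \phi_i$---writes $\det V$ as a signed sum of the $a_i$ against $(k+1) \times (k+1)$ minors of the Descartes system, all of which are positive; the term with index $i \in \set{j_1, \dots, j_k}$ vanishes, and the remaining signs are permutation signs. Comparing this expansion with the uniform sign $c\eta$ found above, and letting $(j_1, \dots, j_k)$ range over all $k$-tuples, produces a family of sign constraints on $(a_1, \dots, a_n)$ whose simultaneous solution forces the increasing, sign-matching tuple $i_0 < \dots < i_k$ sought in the first paragraph. The bound $k \le n-1$ that guarantees finiteness of $\sseq(f)$ drops out of the same analysis. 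I expect this last deduction to be the main obstacle: the determinant identity pins down the sign of each individual minor expansion, but converting the resulting system of sign conditions into an \emph{ordered} subsequence of matching coefficient signs is precisely the (directional) variation-diminishing property of the strictly totally positive evaluation matrix $(\phi_i(x_j))$, which is strictly totally positive by Remark~\ref{rem:Descartes}(ii). One can establish this either by induction on $n$, peeling off the extremal constant-sign block of the coefficient vector and reducing $f$ to a combination of one fewer function with one fewer sign change, or by citing the variation-diminishing theorem for strictly totally positive matrices directly from \cite{karlin1968total}.
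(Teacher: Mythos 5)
The first thing to note is that the paper does not prove Theorem~\ref{thm:vardim} at all: it is quoted from \cite[Thm.~3.1, 4.4]{karlin1966tchebycheff} and \cite[Thm.~3.2.4]{borwein1995polynomials}, and the surrounding remark only translates the statement into sign-sequence language. Measured as the self-contained argument you are aiming for, your sketch is correct up to exactly the point you flag yourself, and there the gap is genuine. The reductions (dropping zero coefficients, $\phi_i>0$ from the $m=1$ minors, linear independence from $m=n$), the reformulation of $\sseq(f)\subseteq\sise{a_1\dotsm a_n}$ as the existence of $i_0<\dotsb<i_k$ with $\sign(a_{i_l})=\epsilon_l$, and the two-way expansion of $\det V$ are all fine. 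But the claim that letting $(j_1,\dotsc,j_k)$ range over all $k$-tuples produces constraints whose ``simultaneous solution forces'' the ordered sign-matching tuple \emph{is} the theorem, and you give no argument for it; it can be verified by hand for $n=k+1$ or $k=1$, $n=3$, but not beyond. Neither escape route repairs this: the one-sentence induction is too vague to assess, and citing the variation-diminishing theorem for strictly totally positive matrices from \cite{karlin1968total} is legitimate but makes your entire determinant apparatus redundant (one would simply apply that theorem to the evaluation matrix $(\phi_i(x_j))$ and the coefficient vector), and it collapses your proof to the same citation route the paper itself takes, merely with the classical result in matrix rather than Chebyshev-system form. Note also a structural symptom of the gap: your constraints only ever use the \emph{non}-vanishing and sign of $\det V$; the classical proof's punchline is a determinant that \emph{vanishes}.

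Here is the missing assembly, which makes your mechanism close the proof. Assume all $a_i\neq 0$, let $m$ be the number of strong sign changes of $(a_1,\dotsc,a_n)$, let $\delta=\sign(a_1)$, group the indices into the $m+1$ maximal blocks of constant coefficient sign, and set $w_l:=\sum_{i\in\text{block }l}\abs{a_i}\,\phi_i$, so that $f=\sum_{l=0}^{m}(-1)^l\delta\,w_l$. By multilinearity, every minor of $(w_0,\dotsc,w_m)$ is a positively weighted sum of Descartes minors (the blocks are consecutive, so the index tuples stay increasing), hence strictly positive. If $f$ had $k\ge m+1$ sign changes, the $(m+2)\times(m+2)$ determinant with columns $f,w_0,\dotsc,w_m$ evaluated at $m+2$ alternation points of $f$ would vanish, because its first column lies in the span of the others, while your cofactor expansion along the $f$-column exhibits it as a sum of nonzero terms of one fixed sign --- a contradiction; hence $k\le m$. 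If $k=m$, solve $f=\sum_l c_l w_l$ at the $k+1$ alternation points by Cramer's rule: the numerator determinant for $c_0$, expanded along the $f$-column exactly as in your computation, has sign $\epsilon_0$, so $\delta=\sign(c_0)=\epsilon_0$, and the alternating coefficient sign pattern coincides with $(\epsilon_0,\dotsc,\epsilon_k)$; if $k<m$, the subsequence relation between pure sequences holds automatically. This block-plus-degeneracy step is precisely what your family of constraints cannot deliver on its own.
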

\begin{rem}
\begin{enumerate}[(i)]
\item This theorem is \cite[Thm.~3.1, 4.4]{karlin1966tchebycheff} (see also \cite[Thm.~3.2.4]{borwein1995polynomials}), translated into the language of sign sequences.
\item The well-known \emph{Descartes' rule of signs}  for polynomials follows by applying this theorem to the Descartes system $(1, x, x^2, \dotsc, x^n)$; see \cite[3.2.E7]{borwein1995polynomials}.
\end{enumerate}
\end{rem}
Given a Descartes system $\cD = (\phi_1, \dots, \phi_n)$, a function of the form 
\[\phi(x) := \sum_{i=1}^n a_i \phi_i(x)\]
is called a \textbf{D-polynomial} in $\cD$. We call $\phi$ \textbf{extremal}, if equality is attained in \eqref{eq:vardim}. The next result concerns the interpolation properties of D-polynomials:
\begin{thm}\label{thm:extremal}
Let $(\phi_1, \dots, \phi_n)$ be a Descartes system on $X$ and let $r_1 < r_2 < \dotsm < r_{n-1}$ be $n-1$ distinct points in $X$. Then there exists a $D$-polynomial $\phi(x) = \sum_{i =1}^n a_i \phi_i(x)$ with all $a_i$ non-zero, which satisfies:
\begin{itemize}
\item $\phi(r_i) = 0$ for all $i \in 1, \dotsc, n-1$;
\item $\phi$ has a strong sign change at each $r_i$ in the interior of $X$.
\end{itemize}
If all $r_i$ are interior points of $X$, then $\phi$ is extremal, i.e., 
\begin{itemize}
\item $\sseq(\phi) \simeq \sise{a_1\,a_2\dotsm a_n}$.
\end{itemize}
\end{thm}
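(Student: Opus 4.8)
The plan is to construct $\phi$ explicitly as a ``Wronskian-type'' determinant and then read off every required property directly from the Descartes condition \eqref{eq:Phi}, without appealing to an abstract zero-counting lemma.

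First I would set
\[
\phi(x) := \det \begin{pmatrix} \phi_1(x) & \phi_2(x) & \cdots & \phi_n(x) \\ \phi_1(r_1) & \phi_2(r_1) & \cdots & \phi_n(r_1) \\ \vdots & & & \vdots \\ \phi_1(r_{n-1}) & \phi_2(r_{n-1}) & \cdots & \phi_n(r_{n-1}) \end{pmatrix}.
\]
Expanding along the first row shows that $\phi(x) = \sum_{i=1}^n a_i \phi_i(x)$ is a $D$-polynomial with coefficients $a_i = (-1)^{i+1} M_i$, where $M_i$ is the minor obtained by deleting the first row and the $i$-th column. Each $M_i$ is exactly a determinant of the form \eqref{eq:Phi} for the points $r_1 < \dotsm < r_{n-1}$ and the increasing index set $\set{1,\dotsc,n}\setminus\set{i}$, so $M_i > 0$ by the Descartes property. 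Hence all $a_i$ are nonzero and strictly alternate in sign, giving $\sise{a_1 a_2 \cdots a_n} \simeq \sise{\pp\mm\pp\cdots}$, the pure alternating sequence with $n-1$ sign changes. Moreover $\phi(r_j)=0$ for every $j$, since substituting $x=r_j$ produces a determinant with two identical rows.

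The decisive step is to determine the sign of $\phi$ on each open subinterval $I_k := (r_k, r_{k+1})$, where $r_0 := \inf X$ and $r_n := \sup X$. For $x \in I_k$ the $n$ abscissae $r_1 < \dotsm < r_k < x < r_{k+1} < \dotsm < r_{n-1}$ are all distinct; sorting the rows of the determinant into increasing order of abscissa moves the first row past exactly $k$ rows, contributing the permutation sign $(-1)^k$, and turns the determinant into one of the form \eqref{eq:Phi} with $m=n$, which is strictly positive. Therefore $\phi(x) = (-1)^k\cdot(\text{positive})$, so $\phi$ has strictly constant sign $(-1)^k$ throughout $I_k$. In particular $\phi$ has no zeros beyond the $r_j$, and at every $r_i$ lying in the interior of $X$ the two adjacent intervals carry opposite signs, which is precisely a strong sign change.

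Finally, if all $r_i$ are interior, then each of $I_0, \dotsc, I_{n-1}$ has nonempty interior, so $\phi$ realizes the full alternating pattern $(-1)^k$ on $I_k$; thus $\sseq(\phi) \simeq \sise{\pp\mm\pp\cdots} \simeq \sise{a_1 a_2 \cdots a_n}$ and $\phi$ is extremal. I expect the only genuine subtlety to be bookkeeping: fixing the permutation sign $(-1)^k$ correctly and tracking which end intervals collapse when some $r_i$ coincides with an endpoint of $X$. Notably, the variation-diminishing bound of Theorem~\ref{thm:vardim} is not even required here, since the strict positivity of the reordered determinant already forces both the sign changes and the extremality.
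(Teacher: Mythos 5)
Your proposal is correct and takes essentially the same route as the paper: the paper's proof uses exactly the same interpolation determinant \eqref{eq:interpol}, obtains the zeroes from repeated rows, the strong sign changes from the row-transposition/Descartes-positivity argument, and the alternating nonzero coefficients from the cofactor expansion \eqref{eq:coefficients}. The only difference is the final step, where you conclude extremality by explicitly matching the $(-1)^k$ sign pattern on the intervals $I_k$ against the alternating coefficient signs, whereas the paper invokes the variation-diminishing bound of Theorem~\ref{thm:vardim} (since $\phi$ already has $n-1$ sign changes, equality must hold in \eqref{eq:vardim}); your observation that Theorem~\ref{thm:vardim} can be dispensed with is a valid, slightly more self-contained variant of the same argument.
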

This result follows from \cite[Ch.~I, Thm.~5.1]{karlin1966tchebycheff} or \cite[3.1.E11]{borwein1995polynomials}, but we provide a self-contained proof and some related results in Sec.~\ref{app:interpolation} and \ref{app:interpolation2}.

\section{Proof of the main result}\label{sec:proof}
The proof of Theorem~\ref{thm:main} and its corollaries rests on identifying Descartes systems related to yield and forward curves in the two-dimensional Vasicek model. These Descartes systems are given in Section~\ref{sec:Descartes} below and allow to apply the results from the theory of total positivity from above. The proof of Theorem~\ref{thm:main} is then given in two parts: First, in Section~\ref{sec:admissible}, we show necessity, i.e., that no term structure shapes outside of the lists given in Theorem~\ref{thm:main} can be attained. Then we show sufficiency, i.e., that all listed shapes are actually attainable. This more difficult part is done in Section~\ref{sec:attainable}.

\subsection{Descartes systems for the Vasicek model}\label{sec:Descartes}
We introduce several Descartes systems associated to the two-dimensional Vasicek model. As we will show, the derivatives of the forward curve and the yield curve can be written as D-polynomials in these systems. The next Lemma follows directly from Remark~\ref{rem:Descartes}(\ref{item:exp}) and from the ordering of exponents that is implied by the scale-separation properties: 
\begin{lem}\label{lem:Descartes_f} The following families of functions are Descartes systems on $[0,\infty)$: 
\begin{align*}
\cD_\text{sep} &= (e^{-2 \lambda_2 x}, e^{-(\lambda_1 + \lambda_2)x}, e^{-\lambda_2 x}, e^{-2 \lambda_1 x},  e^{-\lambda_1 x}) \quad &&\text{if} \quad  2 \lambda_1 < \lambda_2\\
\cD_\text{prox} &= (e^{-2 \lambda_2 x}, e^{-(\lambda_1 + \lambda_2)x}, e^{-2 \lambda_1 x}, e^{-\lambda_2 x},  e^{-\lambda_1 x}) \quad &&\text{if} \quad  2 \lambda_1 > \lambda_2\\
\cD_\text{crit} &= (e^{-2 \lambda_2 x}, e^{-(\lambda_1 + \lambda_2)x}, e^{-\lambda_2 x}, e^{-\lambda_1 x}) \quad &&\text{if} \quad  2 \lambda_1 = \lambda_2
\end{align*}
\end{lem}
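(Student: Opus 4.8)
The plan is to reduce everything to the single criterion recorded in Remark~\ref{rem:Descartes}(\ref{item:exp}): a family of exponentials $(e^{\gamma_1 x}, \dotsc, e^{\gamma_n x})$ is a Descartes system on a subinterval of $\RR$ precisely when its exponents are strictly increasing, $\gamma_1 < \dotsm < \gamma_n$. Since $[0,\infty)$ is such a subinterval, for each of the three families it then suffices to read off the exponents and verify that, in the order given, they form a strictly increasing sequence. This is exactly the point at which the scale-separation hypotheses enter.

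First I would treat $\cD_\text{sep}$, whose exponents in order are $-2\lambda_2,\ -(\lambda_1+\lambda_2),\ -\lambda_2,\ -2\lambda_1,\ -\lambda_1$. Strict increase amounts to the four inequalities $\lambda_2 > \lambda_1$, $\lambda_1 > 0$, $\lambda_2 > 2\lambda_1$, and $\lambda_1 > 0$, all of which hold under $0 < \lambda_1 < \lambda_2$ together with the scale-separated assumption $2\lambda_1 < \lambda_2$. For $\cD_\text{prox}$ the exponents in order are $-2\lambda_2,\ -(\lambda_1+\lambda_2),\ -2\lambda_1,\ -\lambda_2,\ -\lambda_1$, and strict increase reduces to $\lambda_2 > \lambda_1$ (appearing three times) together with the scale-proximal assumption $2\lambda_1 > \lambda_2$, which governs precisely the middle step $-2\lambda_1 < -\lambda_2$.

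For $\cD_\text{crit}$ I would first observe that the hypothesis $2\lambda_1 = \lambda_2$ forces $e^{-2\lambda_1 x}$ and $e^{-\lambda_2 x}$ to be the identical function, so that the would-be five-term family degenerates and one retains only the four distinct exponentials with exponents $-2\lambda_2,\ -(\lambda_1+\lambda_2),\ -\lambda_2,\ -\lambda_1$; their strict increase again follows from $0 < \lambda_1 < \lambda_2$ alone. This degeneration is the reason the critical system is a four-function family rather than a five-function one.

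The only thing to be careful about, and the sole substantive point in the argument, is the relative placement of the two middle exponents $-\lambda_2$ and $-2\lambda_1$: this is exactly the ordering that distinguishes $\cD_\text{sep}$ from $\cD_\text{prox}$ and that collapses in $\cD_\text{crit}$. The trichotomy in the definition of scale separation has been arranged so that in each regime this pair lands in the correct slot of the list; once that placement is checked against $2\lambda_1 \lessgtr \lambda_2$, no genuine obstacle remains and the lemma follows directly from Remark~\ref{rem:Descartes}(\ref{item:exp}).
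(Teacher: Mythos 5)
Your proof is correct and follows exactly the paper's own route: the paper likewise derives the lemma directly from Remark~\ref{rem:Descartes}(\ref{item:exp}) by checking that the exponents are strictly increasing in each scale regime, with the critical case arising from the collapse $e^{-2\lambda_1 x}=e^{-\lambda_2 x}$. Your verification of the individual inequalities simply spells out the details the paper leaves implicit.
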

Note that the only difference between $\cD_\text{prox}$ and $\cD_\text{sep}$ are the order of the third and the fourth element. Collapsing these cases yields the boundary case $\cD_\text{crit}$.\\ 

For the analysis of yield curve shapes a slightly different Descartes system is needed:
\begin{lem} \label{lem:Descartes_g}
Set
\begin{equation}\label{eq:g_def}
g_\alpha(x) = \frac{1}{x^2}\int_0^x y e^{-\alpha y} dy = \tfrac{1}{ \alpha^2 x^2} \left(e^{-\alpha x} - 1 + \alpha x e^{-\alpha x}\right).
\end{equation}
The following families of functions are Descartes systems on $[0,\infty)$:
\begin{align*}
\cE_\text{sep} &= \left(g_{2 \lambda_2}, g_{\lambda_1 + \lambda_2},  g_{\lambda_2}, g_{2 \lambda _1}, g_{\lambda_1}\right) \quad &&\text{if} \quad 2 \lambda_1 < \lambda_2\\
\cE_\text{prox} &= \left(g_{2 \lambda_2}, g_{\lambda_1 + \lambda_2}, g_{2 \lambda _1}, g_{\lambda_2}, g_{\lambda_1}\right) \quad &&\text{if} \quad 2 \lambda_1 > \lambda_2\\
\cE_\text{crit} &= \left(g_{2 \lambda_2}, g_{\lambda_1 + \lambda_2}, g_{\lambda_2}, g_{\lambda_1}\right) \quad &&\text{if} \quad 2 \lambda_1 = \lambda_2
\end{align*}
\end{lem}
Note that $g_\alpha(x)$ can be written as
\begin{equation}\label{eq:g_rep}
g_\alpha(x) = \int_0^\infty K(x,y) e^{-\alpha y} dy, \quad \text{where} \quad K(x,y) = \frac{y}{x^2}\Ind{y \le x}.
\end{equation}
The introduced kernel is of the form $K(x,y) = \phi(x)\psi(y)L(x,y)$, where $\phi(x) = \frac{1}{x^2}$, $\psi(y) = y$ are strictly positive on $(0,\infty)$ and where $L(x,y) = \Ind{y \le x}$. The total positivity of $L(x,y) = \Ind{y \le x}$ is shown in \cite[Ch.~3, Eq.(1.10)ff]{karlin1968total}, and the total positivity of the composed kernel $K(x,y)$ follows from \cite[Ch.~1, Thm.~2.1]{karlin1968total}. Thus, the systems $\cE_{(\dots)}$ are \emph{totally positive transformations} of the systems $\cD_{(\dots)}$. This immediately implies that they are `weak Descartes systems' on $(0,\infty)$, i.e. that \eqref{eq:Phi} holds with non-strict inequality. A full proof of their `strong' Descartes property including the boundary point $x=0$ is given in Section~\ref{app:Descartes_E}.

\subsection{Necessary conditions for attainability}\label{sec:admissible}

To derive necessary conditions for attainability of term structure shapes, we write the derivatives of the forward and the yield curve as D-polynomials in the Descartes systems introduced in Lemmas~\ref{lem:Descartes_f} and \ref{lem:Descartes_g} and determine their coefficients. The first step is to calculate the derivatives:

\begin{lem}\label{lem:curve_diff}The derivative of the forward curve in the Vasicek model is given by 
\begin{equation}\label{eq:l_poly}
\partial_x f(x) = u_2 \varphi_{2\lambda_2}(x) + c \varphi_{\lambda_1 + \lambda_2}(x) + w_2 \varphi_{\lambda_2}(x) +  u_1 \varphi_{2\lambda_1}(x) + w_1 \varphi_{\lambda_1}(x), 
\end{equation}
with $\varphi_\alpha(x) = e^{-\alpha x}$ and coefficients given, for $j \in \set{1,2}$,  by 
\begin{align*}
u_j &= \frac{\sigma_j^2}{\lambda_j} \ge 0\\
w_j &= w_j(z_j) = \lambda_j \left(\theta_j - z_j\right) - \frac{\sigma_j^2}{\lambda_j} - \rho \lambda_j \frac{\sigma_1\sigma_2}{\lambda_1\lambda_2}
\intertext{and}
c &= \rho (\lambda_1 + \lambda_2) \frac{\sigma_1\sigma_2}{\lambda_1\lambda_2}.
\end{align*}
The derivative of the yield curve is given by 
\begin{align}\label{eq:m_poly}
\partial_x Y(x) &= \frac{1}{x^2} \int_0^x \partial_xf(y) \, y dy = \notag \\
&= u_2 g_{2\lambda_2}(x) + c g_{\lambda_1 + \lambda_2}(x) + w_2 g_{\lambda_2}(x) +  u_1 g_{2\lambda_1}(x) + w_1 g_{\lambda_1}(x), 
\end{align}
with $g_\alpha(x)$ given by \eqref{eq:g_def}.
\end{lem}
\begin{proof}
From \eqref{eq:forward} we obtain
\[\partial_x f(x;z) = -A''(x)  - z^\top B''(x),\]
which evaluates to 
\begin{align*}
\partial_x f(x;z) &=  \begin{pmatrix} e^{-\lambda_1 x} \\e^{-\lambda_2 x} \end{pmatrix}^\top 
\left\{ - \begin{pmatrix}\lambda_1 \theta_1\\\lambda_2 \theta_2 \end{pmatrix} -  \begin{pmatrix}\sigma_1^2 & \rho \sigma_1 \sigma_2\\\rho \sigma_1 \sigma_2 &  \sigma_2^2\end{pmatrix}  \begin{pmatrix}\frac{1}{\lambda_1} (e^{-\lambda_1 x} - 1)\\\frac{1}{\lambda_2} (e^{-\lambda_2 x} - 1)\end{pmatrix} + \begin{pmatrix}z_1\\z_2 \end{pmatrix} \right\}
\end{align*}
and after rearrangement gives \eqref{eq:l_poly}.\\
For the yield curve, differentiation of \eqref{eq:yield} gives
\[\partial_x Y(x;z) = \frac{1}{x^2} \left\{\left(A(x) - xF(B(x))\right) + z^\top \left(B(x) - xR(B(x))\right)\right\}.\]
Multiplying with $x^2$ and taking another derivative we obtain
\[\partial_x(x^2 \partial_x Y(x;z)) = - x A''(x)   - x z^\top B''(x) = x \partial_x f(x;z),\]
which yields \eqref{eq:m_poly}.
\end{proof}

Combining this result with Lemmas~\ref{lem:Descartes_f} and \ref{lem:Descartes_g}, we obtain the following: 

\begin{lem}\label{lem:coef}The functions $\partial_x f$ and $\partial_x Y$ are D-polynomials in the Descartes systems $\cD$ and $\cE$ respectively, with coefficients given by 
\begin{itemize}
\item $(u_2, c, u_1, w_2, w_1)$ in the scale-proximal case,
\item $(u_2, c, w_2, u_1, w_1)$ in the scale-separated case,
\item $(u_2, c, w_2 + u_1, w_1)$ in the scale-critical case.
\end{itemize}
\end{lem}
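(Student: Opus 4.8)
The plan is to read off the coefficients directly from the expansion \eqref{eq:l_poly} of $l$ by matching each exponential $f_\alpha(x) = e^{-\alpha x}$ against the ordered elements of the relevant Descartes system from Lemma~\ref{lem:Descartes_f}. Since a D-polynomial is by definition a linear combination of the system's functions, identifying the coefficient vector amounts to pairing each term of \eqref{eq:l_poly} with its slot in $\cD$. The systems $\cD_\text{sep}$ and $\cD_\text{prox}$ differ only in the placement of $f_{\lambda_2}$ and $f_{2\lambda_1}$, so the two coefficient vectors $(u_2, c, w_2, u_1, w_1)$ and $(u_2, c, u_1, w_2, w_1)$ will arise simply by transposing the two corresponding entries.

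First I would observe that the five exponents $2\lambda_2$, $\lambda_1+\lambda_2$, $\lambda_2$, $2\lambda_1$, $\lambda_1$ appearing in \eqref{eq:l_poly} are pairwise distinct precisely when $2\lambda_1 \ne \lambda_2$, i.e. in the scale-separated and scale-proximal cases. In each of these cases the correspondence between the exponentials in \eqref{eq:l_poly} and the Descartes-system entries is a bijection, so the coefficient vector is obtained by relabeling: in $\cD_\text{sep}$ the third and fourth slots carry $f_{\lambda_2}$ and $f_{2\lambda_1}$ (hence $w_2$ then $u_1$), whereas in $\cD_\text{prox}$ these two slots are swapped (hence $u_1$ then $w_2$). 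This yields the first two lines of the claim.

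For the scale-critical case $2\lambda_1 = \lambda_2$ I would note that the exponents $2\lambda_1$ and $\lambda_2$ coincide, so the functions $f_{\lambda_2}$ and $f_{2\lambda_1}$ are identical and the two terms $w_2 f_{\lambda_2} + u_1 f_{2\lambda_1}$ in \eqref{eq:l_poly} collapse to the single term $(w_2+u_1) f_{\lambda_2}$. Matching the remaining four distinct exponentials against $\cD_\text{crit}$ then gives the coefficient vector $(u_2, c, w_2+u_1, w_1)$, as claimed.

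Finally, the statement for $m$ follows with no additional work from the integral representation $m(x) = \frac{1}{x^2}\int_0^x y\, l(y)\, dy$ of Lemma~\ref{lem:curve_diff}: by linearity of the integral, each $f_\alpha$ in \eqref{eq:l_poly} is replaced by $\frac{1}{x^2}\int_0^x y e^{-\alpha y}\, dy = g_\alpha(x)$, while the scalar coefficients are unchanged. Since each $\cE$ is obtained from the corresponding $\cD$ by substituting $g_\alpha$ for $f_\alpha$ in the same order, $m$ is the D-polynomial in $\cE$ with exactly the same coefficient vector as $l$ in $\cD$. There is no genuine obstacle here; the only point requiring care is the coincidence of exponents in the critical case, which is precisely what produces the merged coefficient $w_2+u_1$.
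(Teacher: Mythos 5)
Your proposal is correct and follows essentially the same route as the paper: the paper likewise reads the coefficients off the expansion \eqref{eq:l_poly}, matches them to the ordered slots of $\cD_\text{prox}$, $\cD_\text{sep}$, $\cD_\text{crit}$ (with the merged coefficient $w_2+u_1$ in the critical case), and transfers the result to $m$ via the representation $m(x) = \frac{1}{x^2}\int_0^x y\, l(y)\, dy$, which replaces each $f_\alpha$ by $g_\alpha$ while leaving the coefficients unchanged.
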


We can now use the variation-diminishing property of Descartes systems to derive restrictions on attainable forward and yield curve shapes. 

\begin{thm}\label{thm:sise_m}If $\rho \ge 0$, then the sign sequence of the derivative of the forward and the yield curve, $d \in \set{\partial_x f,\partial_x Y}$, satisfies
\begin{align*}
&\sseq(d)\subseteq \sise{\pp w_2 w_1} \quad &&\text{(under scale-proximity)}\\
&\sseq(d)\subseteq \sise{\pp w_2 \pp w_1} \quad &&\text{(under scale-separation)}\\
&\sseq(d)\subseteq \sise{\pp (u_1 + w_2) w_1} &&\quad \text{(under scale-criticality)}.
\end{align*}
If $\rho < 0$ then the sign sequence of $d \in \set{\partial_x f,\partial_x Y}$ satsifies
\begin{align*}
&\sseq(d)\subseteq \sise{\pp \mm \pp w_2 w_1} \quad &&\text{(under scale-proximity)}\\
&\sseq(d)\subseteq \sise{\pp \mm w_2 \pp w_1} \quad &&\text{(under scale-separation)}\\
&\sseq(d)\subseteq \sise{\pp \mm (u_1 + w_2) w_1} &&\quad \text{(under scale-criticality)}.
\end{align*}
\end{thm}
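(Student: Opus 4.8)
The plan is to read off the signs of the D-polynomial coefficients supplied by Lemma~\ref{lem:coef} and then invoke the variation-diminishing property of Descartes systems (Theorem~\ref{thm:vardim}) directly. Since Lemma~\ref{lem:coef} tells us that $l$ and $m$ are D-polynomials in $\cD$ and $\cE$ with the \emph{same} coefficient tuples, I would treat the abstract coefficient sequence once, so that the conclusion applies simultaneously to both members $q \in \set{l,m}$.

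First I would record the sign of each coefficient. From the formulas preceding Lemma~\ref{lem:coef} we have $u_1 = \frac{\sigma_1^2\kappa_1^2}{\lambda_1} \ge 0$ and $u_2 = \frac{\sigma_2^2\kappa_2^2}{\lambda_2} \ge 0$ unconditionally, while $c = \rho(\lambda_1+\lambda_2)\frac{\sigma_1\sigma_2\kappa_1\kappa_2}{\lambda_1\lambda_2}$ satisfies $\sign(c) = \sign(\rho)$ because every remaining factor is non-negative. The entries $w_1, w_2$ (and the combination $u_1 + w_2$ in the critical case) carry no sign restriction and are left as symbolic signs.

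Next, for each of the six cases I would substitute these signs into the coefficient tuple of Lemma~\ref{lem:coef} and apply Theorem~\ref{thm:vardim}, which gives $\sseq(q) \subseteq \sise{\text{coefficients}}$. For $\rho \ge 0$ the entries $u_2, c, u_1$ all evaluate to $\pp$, so in the scale-proximal case $\sseq(q) \subseteq \sise{\pp\pp\pp w_2 w_1} \simeq \sise{\pp w_2 w_1}$; the scale-separated tuple $(u_2,c,w_2,u_1,w_1)$ yields $\sise{\pp\pp w_2 \pp w_1} \simeq \sise{\pp w_2 \pp w_1}$; and the critical tuple $(u_2, c, w_2+u_1, w_1)$ yields $\sise{\pp (u_1+w_2) w_1}$. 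For $\rho < 0$ the only change is $\sign(c) = \mm$, which prepends the block $\sise{\pp\mm}$ and produces the three stated sequences.

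The one genuinely delicate point — where I expect to spend the care — is the bookkeeping of \emph{vanishing} coefficients. When $\sigma_1$ or $\sigma_2$ equals zero (or $\rho=0$ in the $\rho \ge 0$ branch) one or more of $u_1, u_2, c$ is exactly zero rather than strictly positive, so the evaluated sequence carries zeroes in slots where the target pattern displays $\pp$. Here I would invoke the zero-omission convention for sign sequences: after deleting zeroes the reduced sequence is still a subsequence of the claimed pattern, since the spurious leading $\pp$ (and, in the separated case, the interior $\pp$) may simply be dropped without destroying the subsequence relation. Transitivity of $\subseteq$ then preserves the inclusion $\sseq(q) \subseteq \sise{\dotsc}$ in every degenerate subcase, so the listed bounds hold for all admissible parameters and not merely the generic ones.
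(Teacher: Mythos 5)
Your proposal is correct and matches the paper's own proof: the paper likewise applies Theorem~\ref{thm:vardim} to the coefficient tuples of Lemma~\ref{lem:coef}, uses the unconditional non-negativity of $u_1, u_2$ (and the sign of $c$ inherited from $\rho$), and then reduces the resulting sign sequences. Your explicit treatment of vanishing coefficients via the zero-omission convention is a point the paper leaves implicit, but it is the same argument.
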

For forward curves this result can be strengthened by using additional information from the terminal sign of $\partial_x f$.
\begin{cor}\label{cor:sise_l}
In Theorem~\ref{thm:sise_m} `$\subseteq$' can be replaced by `$\tail$' whenever the sign sequence of $\partial_x f$ is considered.
\end{cor}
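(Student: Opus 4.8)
The plan is to upgrade the subsequence relation of Theorem~\ref{thm:sise_m} to a tail relation by computing the terminal sign $S_\text{term}(l)$ explicitly and checking that it agrees with the terminal sign of each right-hand side. Since $\subseteq$ already holds and a subsequence that preserves the terminal sign is, by definition, a tail, it suffices to verify this single sign match.

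First I would recall from \eqref{eq:l_poly} that $l$ is a D-polynomial in the exponential Descartes system, whose basis functions $f_\alpha(x) = e^{-\alpha x}$ are ordered so that the final one decays slowest (smallest exponent $\lambda_1$). Writing $l(x) = \sum_i a_i e^{-\alpha_i x}$ with $\alpha_1 > \dots > \alpha_n$, the leading asymptotics as $x \to \infty$ are $l(x) \sim a_k e^{-\alpha_k x}$, where $k$ is the largest index with $a_k \neq 0$. Hence $S_\text{term}(l) = \sign(a_k)$, which is exactly the terminal sign of the coefficient sign sequence $\sise{a_1 \cdots a_n}$ after omitting trailing zeroes. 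In every case of Lemma~\ref{lem:coef} this last coefficient is $w_1$ (or the first nonzero entry reading from the right, when $w_1$ vanishes).

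Next I would combine this with the variation-diminishing bound $\sseq(l) \subseteq \sise{a_1 \cdots a_n}$ from Theorem~\ref{thm:vardim}. Passing to the reduced right-hand sides of Theorem~\ref{thm:sise_m} (obtained using $u_1, u_2 \ge 0$ and that $c$ has the sign of $\rho$) does not alter the terminal sign, since reduction never changes the last nonzero entry of a sign sequence. Hence the terminal sign of each reduced right-hand side equals $\sign(a_k) = S_\text{term}(l)$. An inclusion of sign sequences that preserves the terminal sign is a tail, so $\subseteq$ becomes $\tail$ in every case.

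The point worth isolating is why the same reasoning does not apply to $m$, which is precisely why the statement is confined to $l$. The basis functions of the yield-curve Descartes system satisfy $g_\alpha(x) \sim -1/(\alpha^2 x^2)$ as $x \to \infty$, so all of them decay at the common rate $x^{-2}$. The terminal sign of $m$ is therefore governed by the weighted sum $-\sum_i a_i/\alpha_i^2$ rather than by the last nonzero coefficient alone, and no tail relation can be read off directly. This contrast between the single-rate asymptotics of $l$ and the uniform-rate asymptotics of $m$ is the only real subtlety in the argument.
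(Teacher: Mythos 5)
Your proof is correct and follows essentially the same route as the paper: you identify $S_\text{term}(l)$ with the sign of the coefficient of the slowest-decaying exponential (namely $w_1$, or the last nonzero coefficient when $w_1=0$), observe that this is also the terminal sign of every right-hand side of Theorem~\ref{thm:sise_m}, and thereby upgrade $\subseteq$ to $\tail$; your handling of the case of vanishing $w_1$ is in fact slightly more careful than the paper's. The only slip is in your side remark on $m$: from the definition $g_\alpha(x)=\frac{1}{x^2}\int_0^x y e^{-\alpha y}\,dy$ one gets $g_\alpha(x)\sim +1/(\alpha^2 x^2)$, not $-1/(\alpha^2 x^2)$ (the minus sign traces to a sign typo in the paper's closed-form expression for $g_\alpha$), but your conclusion that all $g_\alpha$ decay at the common rate $x^{-2}$ -- so that the terminal sign of $m$ is governed by a weighted sum of coefficients and no tail relation can be read off -- is unaffected.
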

\begin{proof}
Theorem~\ref{thm:sise_m} follows by applying Theorem~\ref{thm:vardim} to the coefficients given in Lemma~\ref{lem:coef}. In doing so, we take into account that $u_j$ has positive sign regardless of the choice of parameters, and apply the reductions of sign sequences described in Sec.~1.1 to arrive at the expressions on the right hand sides.\\
For the corollary, the obtained relations can be strengthened from $\subseteq$ to $\tail$ by analyzing the terminal sign (the sign after the last sign change) of $\partial_x f$. From Lemma~\ref{lem:curve_diff} we obtain that $\lim_{x \to \infty} \partial_x f(x) = 0$, which, however, yields no information on the terminal sign. Rather, the terminal sign of $\partial_x f$ must be  determined by the component with the slowest decay, which is $w_1 \varphi_{\lambda_1}(x) = w_1 e^{-\lambda_1 x}$. Thus, the terminal sign of $\partial_x f$ is equal to the sign of $w_1$, which is the last sign in all sequences of Lemma~\ref{lem:coef}. We conclude that $\sseq(\partial_x f)$ is not just a subset, but rather a tail of all the sign sequences that were obtained on the right hand sides.\footnote{Note that the same approach does not work for $\partial_x Y$ due to the different asymptotic behaviour as $x$ tends to infinity.}
\end{proof}

Using Theorem~\ref{thm:sise_m} we obtain the first part of our main result, Theorem~\ref{thm:main}.

\begin{proof}[Proof of Theorem~\ref{thm:main} -- necessity]
Consider the case of the forward curve. The shape of the forward curve is determined by the sign sequence of $\partial_x f$, and this sign sequence is controlled by the results of Corollary~\ref{cor:sise_l}. Hence, restrictions on attainable term structure shapes can be obtained by iterating through all cases of Corollary~\ref{cor:sise_l} and through the four possible sign combinations of $w_1$ and $w_2$. Note that we only need to consider the strict signs $\pp$ and $\mm$, because zeroes can be omitted from sign sequences and do not lead to additional shapes. Instead of listing all possible combinations, we discuss two exemplary cases: 
\begin{itemize}
\item Suppose that $\rho \ge 0$, $w_1 > 0$ and $w_2 < 0$. In the scale-proximal case we obtain from Corollary~\ref{cor:sise_l}, that
\[\sseq(\partial_x f)\tail \sise{\pp \mm \pp}.\]
The possible tail sequences of $\sise{\pp \mm \pp}$ are $\sise{\pp}, \sise{\mm\pp}$ and $\sise{\pp \mm \pp}$ itself. These cases correspond to the shapes \normal{}, \humped{} and \hudi{}, and we conclude that no other forward curve shapes can be attainable under the given parameter restrictions. Switching to scale-separation, Corollary~\ref{cor:sise_l} yields
\[\sseq(\partial_x f)\tail \sise{\pp \mm \pp \pp} \simeq \sise{\pp \mm \pp},\]
and the same admissible shapes are obtained as in the scale-proximal case. 
\item Now suppose that $\rho \ge 0$, $w_1 < 0$ and $w_2 < 0$. In the scale-proximal case Corollary~\ref{cor:sise_l} yields
\[\sseq(\partial_x f)\tail \sise{\pp \mm \mm} \simeq \sise{\pp \mm},\]
which leaves the shapes \inverse{}, \humped{} as potentially attainable shapes. In the scale-separated case we obtain
\[\sseq(\partial_x f)\tail \sise{\pp \mm \pp \mm},\]
which, in addition, leaves \dihu{} and \HDH{} as potentially attainable.
\end{itemize}
Applying the same procedure to all other cases produces the lists given in the theorem, in the case of forward curves. The scale-critical case can be treated like the scale-proximal case if $\rho \ge 0$, and like the scale-separated case if $\rho < 0$. 
For yield curves, we apply Theorem~\ref{thm:sise_m} to $\partial_x Y$ in the same manner. Despite the weaker constraint $\subseteq$ instead of $\tail$, it turns out (after iterating through all cases) that the same lists of shapes are obtained. 
\end{proof}

\subsection{Sufficient conditions for attainability}\label{sec:attainable}
To complete the proof of Theorem~\ref{thm:main}, we need to show sufficiency, i.e., that all listed shapes are actually attainable. Before going into details, we describe the general strategy of the proof: Let a shape $\mathsf{S}$ of the forward curve with $k$ local extrema be given. Choosing a suitable Descartes-subsystem $\cD'$ of $\cD$ with $k+1$ elements, we can apply Theorem~\ref{thm:extremal} and find a D-polynomial $f$ in $\cD'$, such that $f$ has a sign sequence with $k$ sign changes, which corresponds to the shape $\mathsf{S}$. Padding the list of coefficients with zeroes, we can write $f$ as a D-polynomial in the full system $\cD$, i.e. as
\begin{equation*}
f(x) = a_{2\lambda_2} \varphi_{2\lambda_2}(x) + a_{\lambda_1 + \lambda_2} \varphi_{\lambda_1 + \lambda_2}(x) + a_{\lambda_2} \varphi_{\lambda_2}(x) +  a_{2 \lambda_1} \varphi_{2\lambda_1}(x) + a_1 \varphi_{\lambda_1}(x), 
\end{equation*}
where we have labeled the coefficients $a$ consistently with the basis functions of $\cD$. Comparing coefficients with \eqref{eq:l_poly}, we can conclude that the shape $\mathsf{S}$ is attainable in the Vasicek-model, if we can show that the system of equations
\begin{subequations}\label{eq:key}
\begin{align}
\frac{\sigma_1^2}{\lambda_1} &= a_{2\lambda_1} \label{eq:key_sigma1}\\
\frac{\sigma_2^2}{\lambda_2} &= a_{2\lambda_2} \label{eq:key_sigma2}\\
\rho (\lambda_1 + \lambda_2) \frac{\sigma_1\sigma_2}{\lambda_1\lambda_2} &= a_{\lambda_1 + \lambda_2} \label{eq:rho}\\
 \lambda_1 \left(\theta_1 -z_1\right) - \frac{\sigma_1^2}{\lambda_1} - \rho \lambda_1 \frac{\sigma_1\sigma_2}{\lambda_1\lambda_2}  &= a_{\lambda_1}\label{eq:key_z1}\\
 \lambda_2 \left(\theta_2 -z_2\right) - \frac{\sigma_2^2}{\lambda_2} - \rho \lambda_2 \frac{\sigma_1\sigma_2}{\lambda_1\lambda_2}  &= a_{\lambda_2}\label{eq:key_z2}
\end{align}
\end{subequations}
has a solution $(\sigma_1, \sigma_2, \rho, z_1, z_2) \in [0,\infty)^2 \times [-1,1] \times \RR^2$. The argument for yield curves is analogous, using the appropriate Descartes system $\cE$ from Lemma~\ref{lem:Descartes_g}.\\
Having reduced the attainability problem to the equation system \eqref{eq:key}, we need to discuss its solvability: Clearly, whenever \eqref{eq:key_sigma1} -- \eqref{eq:rho} can be solved for $(\sigma_1, \sigma_2, \rho)$, then also \eqref{eq:key_z1} and \eqref{eq:key_z2} can be solved for $(z_1, z_2)$. Moreover, the solvability of \eqref{eq:key_sigma1} and \eqref{eq:key_sigma2} for $(\sigma_1, \sigma_2)$ only depends on the signs of $a_{2\lambda_1}$ and $a_{2\lambda_2}$. It is therefore only \eqref{eq:rho} for which solvability is nontrivial, due to the restriction $\rho \in [-1,1]$. These elementary observations are summarized in the following Lemma:
\begin{lem}\label{lem:key} Consider the system of equations given in \eqref{eq:key}
\begin{enumerate}[(a)]
\item If $a_{2\lambda_1} < 0$ or $a_{2\lambda_2} < 0$, then \eqref{eq:key} has no solution.
\item If $a_{2\lambda_1} = 0$ and $a_{2\lambda_2} \ge 0$, or if $a_{2\lambda_1} \ge 0$ and $a_{2\lambda_2} = 0$ then \eqref{eq:key} has a solution. In this solution $\sigma_1 = \rho = 0$ or $\sigma_2 = \rho = 0$ or both.
\item If $a_{2\lambda_1} > 0$ and $a_{2\lambda_2} > 0$, then \eqref{eq:key} has a solution if and only if 
\begin{equation}
\rho := \frac{\sqrt{\lambda_1 \lambda_2}}{\lambda_1  + \lambda_2 } \frac{a_{\lambda_1 + \lambda_2}}{\sqrt{a_{2\lambda_1}a_{2\lambda_1}}} \quad \text{is in $[-1,1]$.}
\end{equation}
\end{enumerate}
\end{lem}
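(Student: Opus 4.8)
The plan is to follow the reduction already indicated before the lemma. Since $\theta_j,\kappa_j,\lambda_j$ are fixed with $\kappa_j,\lambda_j>0$, equations \eqref{eq:key_z1} and \eqref{eq:key_z2} are affine in $z_1$ and $z_2$ with nonvanishing leading coefficients $\kappa_1\lambda_1$ and $\kappa_2\lambda_2$; hence, once $(\sigma_1,\sigma_2,\rho)$ are fixed, they can always be solved for $(z_1,z_2)\in\RR^2$. The whole question therefore collapses to the solvability of the three equations \eqref{eq:key_sigma1}, \eqref{eq:key_sigma2}, \eqref{eq:rho} for $(\sigma_1,\sigma_2,\rho)\in[0,\infty)^2\times[-1,1]$, and I would organize the argument exactly along the three cases (a)--(c).

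For (a), the right-hand sides $\sigma_j^2\kappa_j^2/\lambda_j$ of \eqref{eq:key_sigma1}--\eqref{eq:key_sigma2} are nonnegative because $\kappa_j,\lambda_j>0$; so if $a_{2\lambda_1}<0$ or $a_{2\lambda_2}<0$ there is no real $\sigma_j$ and the system is infeasible. For (c), when $a_{2\lambda_1},a_{2\lambda_2}>0$, equations \eqref{eq:key_sigma1}--\eqref{eq:key_sigma2} have the unique nonnegative solutions $\sigma_j=\sqrt{a_{2\lambda_j}\lambda_j}/\kappa_j>0$. Then $\sigma_1\sigma_2\kappa_1\kappa_2=\sqrt{\lambda_1\lambda_2}\,\sqrt{a_{2\lambda_1}a_{2\lambda_2}}>0$, so \eqref{eq:rho} determines $\rho$ uniquely; solving for $\rho$ reproduces the displayed closed form (with $\sqrt{a_{2\lambda_1}a_{2\lambda_2}}$ in the denominator, correcting the evident typo $\sqrt{a_{2\lambda_1}a_{2\lambda_1}}$), and the sole remaining constraint $\rho\in[-1,1]$ gives the claimed equivalence.

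The delicate point is (b). If $a_{2\lambda_1}=0$ (the case $a_{2\lambda_2}=0$ being symmetric, and the doubly degenerate case handled in the same way), then \eqref{eq:key_sigma1} with $\kappa_1>0$ forces $\sigma_1=0$, so the left-hand side of \eqref{eq:rho} vanishes identically in $\rho$; the equation is then satisfiable only if $a_{\lambda_1+\lambda_2}=0$, in which case any $\rho$ works, in particular $\rho=0$, while $\sigma_2=\sqrt{a_{2\lambda_2}\lambda_2}/\kappa_2$ is well defined since $a_{2\lambda_2}\ge0$. This produces a solution with $\sigma_1=\rho=0$, as asserted. I would make explicit that the bare hypotheses of (b) thus carry the tacit compatibility requirement $a_{\lambda_1+\lambda_2}=0$, and in a careful write-up either add it to the statement or verify that it holds for every coefficient vector to which (b) is applied in Section~\ref{sec:attainable}. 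This forced degeneracy of $\sigma_1$ (or $\sigma_2$) is the only genuine obstacle; the remaining content of (a) and (c) is the routine sign analysis of $\sigma_j^2\kappa_j^2/\lambda_j$ and a single elimination of $(\sigma_1,\sigma_2)$ from \eqref{eq:rho}.
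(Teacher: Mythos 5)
Your proof is correct and follows essentially the same route as the paper: the paper gives no separate proof of this lemma, presenting it as a summary of exactly the elementary observations you spell out (the $z$-equations \eqref{eq:key_z1}--\eqref{eq:key_z2} are always solvable once $(\sigma_1,\sigma_2,\rho)$ is fixed, the $\sigma$-equations constrain only signs, and only \eqref{eq:rho} is delicate because of $\rho \in [-1,1]$). Your two refinements are both genuine: the displayed formula in (c) does contain the typo $\sqrt{a_{2\lambda_1}a_{2\lambda_1}}$ in place of $\sqrt{a_{2\lambda_1}a_{2\lambda_2}}$, and case (b) as stated does carry the tacit requirement $a_{\lambda_1+\lambda_2}=0$, since with $\sigma_1=0$ forced the left-hand side of \eqref{eq:rho} vanishes identically and the system is infeasible for any $a_{\lambda_1+\lambda_2}\neq 0$. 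This hidden hypothesis is in fact satisfied in every invocation of case (b) in Section~\ref{sec:attainable} (in steps (i)--(iv) of the sufficiency proof the coefficient $a_{\lambda_1+\lambda_2}$ is always padded to zero, consistent with $\rho=0$), so the paper's use of the lemma is sound, but your suggestion to make the compatibility condition explicit in the statement would be an improvement.
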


To complete the proof of Theorem~\ref{thm:main} we apply the strategy outlined above on a case-by-case basis to the different shapes:

\begin{proof}[Proof of Theorem~\ref{thm:main} -- sufficiency]
We partition the proof according to the number $k$ of local extrema of the term structure curve; later we also need to distinguish between the cases (a), (b) and (c) given in Theorem~\ref{thm:main}.
\begin{enumerate}[(i)]
\item For $k=0$ we use the system $\cD_1 = (\varphi_{\lambda_1})$. We set $a^\pm_{\lambda_1} = \pm 1$ and all other coefficients to zero. This yields the D-polynomials $\varphi_\pm(x) = \pm \varphi_{\lambda_1}(x) = \pm e^{-\lambda_1 x}$ with sign sequences $\sise{\pp}$ and $\sise{\mm}$. Setting $z_2 = \sigma_1 = \sigma_2 = \rho = 0$ the system \eqref{eq:key} can be solved for $z_1$ in both cases. We conclude that the shapes \normal{} and \inverse{} are attainable.
\item For $k=1$ we use the system $\cD_2 = (\varphi_{\lambda_2}, \varphi_{\lambda_1})$. By Theorem~\ref{thm:extremal} we can find two extremal D-polynomials $\varphi_+, \varphi_-$ with coefficients $(a^\pm_{\lambda_2}, a^\pm_{\lambda_1})$ and sign sequences $\sise{\pp \mm}$ and $\sise{\mm \pp}$. Setting $\sigma_1 = \sigma_2 = \rho = 0$ the system \eqref{eq:key} can be solved for $(z_2, z_1)$ in both cases. We conclude that the shapes \dip{} and \humped{} are attainable.
\item For $k=2$ we use the system $\cD_3 = (\varphi_{2 \lambda_2}, \varphi_{\lambda_2}, \varphi_{\lambda_1})$. By Theorem~\ref{thm:extremal} we can find two extremal D-polynomials $\varphi_+, \varphi_-$ with coefficients $(a^\pm_{2 \lambda_2}, a^\pm_{\lambda_2}, a^\pm_{\lambda_1})$ and sign sequences $\sise{\pp \mm \pp}$ and $\sise{\mm \pp \mm}$. Setting $\sigma_1 = \rho = 0$ the system \eqref{eq:key} can be solved for $(\sigma_2, z_2, z_1)$ in the case of $\varphi_+$. In the case of $\varphi_-$ the system cannot be solved, because $a^-_{2 \lambda_2} < 0$. We conclude that the shape \hudi{} is attainable. 
\setcounter{mycounter}{\value{enumi}}
\end{enumerate}
At this point we have already covered all attainable shapes in the scale-proximal case with $\rho \ge 0$, i.e., part (b) of the theorem. Next we complete part (a), i.e., the scale-separated case:
\begin{enumerate}[(i)]
\setcounter{enumi}{\value{mycounter}}
\item For $k=2$ we can alternatively use the system $\cD_{3,sep} = (\varphi_{\lambda_2}, \varphi_{2 \lambda_1}, \varphi_{\lambda_1})$, which is a subsystem of $\cD_\text{sep}$.\footnote{But not a Descartes subsystem of $\cD_\text{prox}$!} By Theorem~\ref{thm:extremal} we can find two extremal D-polynomials $\varphi_+, \varphi_-$ with coefficients $(a^\pm_{\lambda_2}, a^\pm_{2 \lambda_1}, a^\pm_{\lambda_1})$ and sign sequences $\sise{\pp \mm \pp}$ and $\sise{\mm \pp \mm}$. Setting $\sigma_2 = \rho = 0$ the system \eqref{eq:key} can be solved for $(z_2, \sigma_1, z_1)$ in the case of $\varphi_-$. In the case of $\varphi_+$ the system cannot be solved, because $a^-_{2 \lambda_1} < 0$. We conclude that the shape \dihu{} is attainable. 
\item For $k=3$, we use the system $\cD_\text{4,sep} =  (\varphi_{2 \lambda_2} \varphi_{\lambda_2}, \varphi_{2 \lambda_1}, \varphi_{\lambda_1})$, which is a subsystem of $\cD_\text{sep}$.  By Theorem~\ref{thm:extremal} we can find two extremal D-polynomials $\varphi_+, \varphi_-$ with coefficients $(a^\pm_{2 \lambda_2}, a^\pm_{\lambda_2}, a^\pm_{2 \lambda_1}, a^\pm_{\lambda_1})$ and sign sequences $\sise{\pp \mm \pp \mm}$ and $\sise{\mm \pp \mm \pp}$. Setting $\rho = 0$ the system \eqref{eq:key} can be solved for $(\sigma_2, z_2, \sigma_1, z_1)$ in the case of $\varphi_+$. In the case of $\varphi_-$ the system cannot be solved, because $a^-_{2 \lambda_1} < 0$ and $a^-_{2 \lambda_2} < 0$. We conclude that \HDH{} is attainable. 
\setcounter{mycounter}{\value{enumi}}
\end{enumerate}
At this point we have also covered all attainable shapes in the scale-separated case (with arbitrary $\rho$) and thus part (a) is complete. The most difficult case is part (c), i.e., the scale-proximal case with $\rho < 0$. Here, Theorem~\ref{thm:extremal} is not sufficient to find suitable D-polynomials $\varphi_\pm$ and we have to use the more specialized result Lemma~\ref{lem:special} instead.

\begin{enumerate}[(i)]
\setcounter{enumi}{\value{mycounter}}
\item For $k=3$ we use the system $\cD_{4,prox} = (\varphi_{2 \lambda_2}, \varphi_{\lambda_1 + \lambda_2}, \varphi_{2 \lambda_1}, \varphi_{\lambda_2})$, which is a subsystem of $\cD_\text{prox}$. By Lemma~\ref{lem:special} we can find two sets of real numbers $0 < r_1^+ < r_2^+ < r_3^+$ and $0 = r_1^0 < r_2^0 < r_3^0$ as well as D-polynomials $\varphi_+$ and $\varphi_0$ with the following properties: 
\begin{itemize}
\item The zeroes of $\varphi_+$ and $\varphi_0$ are located exactly at the points $r_1^+, r_2^+, r_3^+$ and $r_1^0 = 0, r_2^0, r_3^0$;
\item the sign sequence of $\varphi_+ $ is $\sise{\pp \mm \pp \mm}$ and the sign sequence of $\varphi_0$ is $\sise{\mm \pp \mm}$;
\item the coefficients of both $\varphi_+$ and $\varphi_0$ have sign sequence $\sise{\pp \mm \pp \mm}$.
\end{itemize}
 Moreover, the coefficients (of both $\varphi_+$ and $\varphi_0$) satisfy
\[\left| \frac{a_{\lambda_1 + \lambda_2}}{\sqrt{a_{2 \lambda_1}a_{2 \lambda_2}}} \right| < 2;\]
see \eqref{eq:coef_ineq}. Thus, applying the geometric-arithmetic-mean inequality, we obtain
\begin{equation}\label{eq:rho_solvable}
|\rho| \le \frac{\sqrt{\lambda_1 \lambda_2}}{\lambda_1 + \lambda_2} \left| \frac{a_{\lambda_1 + \lambda_2}}{\sqrt{a_{2 \lambda_1}a_{2 \lambda_2}}} \right| < 1.
\end{equation}
By Lemma~\ref{lem:key}, this implies that the system of equations \eqref{eq:key} is solvable. We conclude that the shapes \HDH{} and \dihu{} are attainable. 
\item For $k=4$ we use the full system $\cD_\text{prox}$. As in the previous case, we can apply Lemma~\ref{lem:special} to find two D-polynomials $\varphi_+$ and $\varphi_0$ with prescribed zeroes and with sign sequences $\sise{\pp \mm \pp \mm \pp}$ and $\sise{\mm \pp \mm \pp}$ respectively. The first zero of $\varphi_0$ is located at the boundary point $r_1^0 = 0$. Moreover, the coefficients of both $\varphi_+$ and $\varphi_0$ have sign sequence $\sise{\pp \mm \pp \mm \pp}$ and inequality \eqref{eq:rho_solvable} holds. Thus, Lemma~\ref{lem:key} implies that the system of equations \eqref{eq:key} is solvable and we conclude that the shapes \HDHD{} and \DHD{} are attainable.
\end{enumerate}
Having completed part (c), also the last case of Theorem~\ref{thm:main} is shown. The scale-critical case can be treated like the scale-proximal case if $\rho \ge 0$, and like the scale-separated case if $\rho < 0$. 
\end{proof}

\section{Additional Results}\label{sec:additional}

\subsection{State-contingent analysis of term structure shapes}\label{sec:state}
Using the same general ideas as in the previous section, the analysis of term structure shapes can also be carried out \emph{contingent on the state vector $(z_1, z_2) \in \RR^2$.} In other words, the state space $\RR^2$ can be partitioned into regions in which only a few -- often only a single -- term structure shape can occur; see Figure~\ref{fig:1}.

We use four equations to constrain the shape of the term structure. Two are derived from Theorem~\ref{thm:sise_m} and constrain the overall shape of the term structure; the other two are derived from the initial and the terminal sign of the derivative of the yield/forward curve. All equations are linear, i.e., the partitions of the state space can be described as intersections of half-spaces. 
It will be convenient to reparameterize as
\[y_i = z_i - \theta_1\qquad a_i = \sigma_i / \lambda_i.\]
Under this change of variables, the quantities $w_1, w_2$ from Lemma~\ref{lem:curve_diff} become
\begin{align*}
w_1(y_1,y_2) &= - y_1 - (a_1^2 + \rho a_1 a_2)\\
w_2(y_1,y_2) &= - y_2 - (a_2^2 + \rho a_1 a_2).
\end{align*}
These linear functions determine the half spaces
\[W_i^- = \set{(y_1,y_2) \in \RR^2: w_i(y_1,y_2) \le 0}, \qquad i \in \set{1,2},\]
and, with reversed inequalities, $W_{i}^+$. On each of the possible combinations $(W_1^\pm, W_2^\pm)$, the pair $(w_1,w_2)$ has a different combination of signs, and the resulting restrictions on the term structure shape can be read from Theorem~\ref{thm:sise_m}. 

The second pair of inequalities is obtained from the initial and the terminal sign of the term structure curve's derivative. Both quantities can be derived from Lemma~\ref{lem:curve_diff}. The initial (`\underline{s}tarting') sign of the derivative is the same for yield and forward curve and is equal to the sign of 
\[s(y_1,y_2) = u_1 + u_2 + c + w_1(y_1,y_2) + w_2(y_1,y_2) = - \lambda_1 y_1 - \lambda_2 y_2,\]
which gives a linear function defining the half spaces $S^\pm$. 
The terminal sign of the derivative of the yield and forward curve is different; for the forward curve it is equal to the sign of $\lim_{x \to \infty} e^{\lambda_1 x} \partial_x f(x) = w_1(y_1,y_2)$.\footnote{The fact that the terminal sign of $\partial_x f$ is equal to $w_1$ was the key observation that led to Cor.~\ref{cor:sise_l}.} For the yield curve, it is equal to the sign of 
\[t_{\text{yield}}(y_1,y_2) = \lim_{x \to \infty} x^2 \partial_x Y(x) = -y_1 - y_2 - \frac{1}{2}\left(a_1^2 + a_2^2 + 2\rho a_1a_2\right), \]
defining the half spaces $T_\text{yield}^\pm$. 

Overall, there are up to $2^4 = 16$ combinations of the half spaces $(W_1^\pm, W_2^\pm, S^\pm, T_\text{yield}^\pm)$ leading to different restrictions on the yield curve, and $2^3 = 8$ combinations for the forward curve. The actual configuration of half spaces, and hence the number and shape of intersections, depends on the model regime, i.e., scale-proximity vs. scale-separation and on the sign of the correlation parameter $\rho$. A full analysis of all cases is beyond the scope of the paper, but we add some details to the scale-proximal, positively correlated case, which is shown in Figure~\ref{fig:1}. We consider two exemplary cases for the analysis of the forward curve:
\begin{enumerate}[(a)]
\item On the intersection $W_1^- \cap W_2^- \cap S^+$ the pair $(w_1, w_2)$ has two negative signs. Hence, via Theorem~\ref{thm:sise_m}, the sign sequence of $\partial_x f$ is a subsequence of $\sise{\pp \mm}$. The terminal sign of $\partial_x f$ is equal to the sign of $w_1$, hence also negative. This leaves $\sise{\pp \mm}$ and $\sise{\mm}$ as possibles sign sequences of $\partial_x f$, corresponding to the shapes \humped{} and \inverse{}. The final half space $S^+$ restricts the inital sign to \pp{} and selects the unique remaining possibility of a \humped{} forward curve.
\item On the intersection $W_1^- \cap W_2^+ \cap S^+$ the pair $(w_1, w_2)$ has sign sequence $\sise{\mm \pp}$. From Theorem~\ref{thm:sise_m} we obtain the \emph{same} restriction $\sseq(\partial_x f) \subset \sise{\pp \mm}$ as in case (a). The restrictions on initial and terminal sign are also the same, such that the forward curve must also be \humped{} in this case. 
\end{enumerate}
All other intersections of half spaces can be analyzed in the same way. For the yield curve, the analysis in case (a) must be adapted as follows:
\begin{enumerate}[(a')]
\item From Theorem~\ref{thm:sise_m} the restriction $\sseq(\partial_x Y) \subset \sise{\pp \mm}$ is obtained, and the initial sign of the yield curve must be positive on $S^+$. This leaves the possibilities $\sise{\pp \mm}$ and $\sise{\mm}$ for the yield curve's derivative, corresponding to a \humped{} or \inverse{} curve. Contrary to the forward curve there is no restriction to the terminal sign of the curve. Intersecting with the half space $T_\text{yield}^+$ selects the \humped{} curve, intersecting with $T_\text{yield}^-$ selects the \inverse{} curve.
\end{enumerate}
In the scale-proximal, positively correlated case, this type of analysis results in a unique curve shape for all intersections, except for $W_1^+ \cap W_2^- \cap S^+ (\cap\;T_\text{yield}^+)$. Only for this part of the state space, the method is not able to differentiate between a \normal{} and a \hudi{} shape, see also Figure~\ref{fig:1}.

\subsection{Strict, strong, and $\Sigma$-attainability}
The results of Theorem~\ref{thm:main} can be sharpened by introducing the notions of strict, strong, and $\Sigma$-attainability. 
\begin{defn}[Strict and strong attainability]\label{def:strong_attainable}
\hfill
\begin{enumerate}[(a)]
\item A shape $\mathsf{S}$ of the forward curve is called \textbf{strictly attainable}, if we can find a parameter vector $p \in \bm{P}$, such that  $x \mapsto f(x; Z_t, p)$ attains shape $\mathsf{S}$ with strictly positive probability for all $t > 0$.
\item A shape $\mathsf{S}$ of the forward curve with $k$ local extrema is called  \textbf{strongly attainable}, if for any $0 < r_1 < \dotsm < r_k$, we can find a parameter vector $p \in \bm{P}$ and a state vector $z \in \RR^2$, such that $x \mapsto f(x; z, p)$ has shape $\mathsf{S}$, with its local extrema located at $r_1, \dotsc, r_k$. In other words, strong attainability means that the locations (but not the amplitude!) of the extrema of the forward curve can be chosen arbitrarily.
\end{enumerate}
The same terminology is applied to the yield curve $x \mapsto Y(x;z,p)$.
\end{defn}
\begin{rem}
We remark that in (a) it makes no difference whether probabilities under the risk-neutral measure $\QQ$ or probabilities under the statistical measure $\PP$ are considered, as $\QQ$ and $\PP$ are equivalent. It also makes no difference whether `all $t > 0$' or `some $t > 0$' are considered, as in the Vasicek model also the laws of $Z_t$ and $Z_{t'}$ are equivalent for any $t,t' > 0$. 
\end{rem}

Another strengthening of Theorem~\ref{thm:main} can be obtained by varying not all parameters in $\bm{P}$, but only a subset of them. To formulate these results, we write $\bm{P}'$ for $\bm{P}$ with the volatility parameters $(\sigma_1, \sigma_2, \rho)$ removed, and introduce the parameter space of covariance matrices
\[\bm{\Sigma} := \set{\Sigma  = \begin{pmatrix} \sigma_1^2 & \rho \sigma_1 \sigma_2 \\ \rho \sigma_1 \sigma_2 & \sigma_2^2 \end{pmatrix}, \sigma_1, \sigma_2 \in [0,\infty), \rho \in [-1,1]}.\]
Additional restrictions on $\bm{\Sigma}$ are denoted by $\bm{\Sigma}_{\rho <0}$, $\bm{\Sigma}_{\rho = 0}$, etc. We can now introduce the notion of $\Sigma$-attainability:
\begin{defn}
A shape $\mathsf{S}$ of the forward curve is called \textbf{$\Sigma$-attainable}, if for any parameter vector $p' \in \bm{P}'$, we can find a covariance matrix $\Sigma \in \bm{\Sigma}$ and a state vector $z \in \RR^2$, such that  $x \mapsto f(x; z, (p',\Sigma))$ has shape $\mathsf{S}$.\\
The same terminology is applied to the yield curve $x \mapsto Y(x;z, (p',\Sigma))$.
\end{defn}
Combining with Definition~\ref{def:strong_attainable} we also obtain the notions of \textbf{strict and strong $\Sigma$-attainability}.\\

The first corollary to Theorem~\ref{thm:main} that we present, concerns the $\Sigma-$ and the \emph{strong} attainability of term structure curves.

\begin{cor}\label{cor:strong}
\begin{enumerate}[(i)]
\item In all cases of Theorem~\ref{thm:main}, the given shapes are $\Sigma$-attainable, even when restricted to regular covariance matrices only.
\item In cases (a) and (b) the shapes are strongly attainable, and even strongly $\Sigma$- and $\Sigma_{\rho=0}$-attainable. 
\item In case (c), all shapes except possibly \dihu{},  \HDH{}, \DHD{}, and \HDHD{} are strongly attainable, and even $\Sigma$- and $\Sigma_{\rho<0}$-attainable.
\end{enumerate}
\end{cor}

The second corollary concerns the \emph{strict} attainability.

\begin{cor}\label{cor:strict}
\begin{enumerate}[(i)]
\item The shapes of Theorem~\ref{thm:main}(a) are strictly attainable, and even strictly $\Sigma_{\rho > 0}$-, $\Sigma_{\rho = 0}$-, and $\Sigma_{\rho < 0}$-attainable.
\item The shapes of Theorem~\ref{thm:main}(b) are strictly attainable, and even strictly $\Sigma_{\rho > 0}$- and $\Sigma_{\rho = 0}$-attainable
\item The shapes of Theorem~\ref{thm:main}(c), with possible exception of $\dihu$ and $\DHD{}$, are strictly attainable, and even strictly $\Sigma_{\rho < 0}$-attainable. 
\end{enumerate}
\end{cor}

\begin{rem}
We give a heuristic argument supporting Cor.~\ref{cor:strong}, which aims to explain why the variation of the (co-)variance parameters and the state vector is sufficient to attain the listed shapes: 
\begin{itemize}
\item For strong attainability of \HDH{}, the most complex shape in case (a) of Thm.~\ref{thm:main}, four degrees of freedom are needed: Three for the local extrema and an additional degree of freedom to select between \HDH{} and \DHD{}. The parameter space $\bm{\Sigma}_{\rho = 0}$ has two degrees of freedom and the state space $\RR^2$ also has two, matching the required four degrees. 
\item In case (c) of Thm.~\ref{thm:main} the most complex shape, \HDHD{}, needs five degrees of freedom. The parameter space $\bm{\Sigma}_{\rho < 0}$ provides three of them and the state space $\RR^2$ provides two. 
\item In case (b) the positive correlation of the two factor processes together with the proximity of the mean-reversion scales leads to so much positive reinforcement that not all five degrees of freedom can be utilized. 
\end{itemize}
\end{rem}

We now explain how the stronger conclusions of Corollary~\ref{cor:strong} and \ref{cor:strict} can be obtained from the proof of Theorem~\ref{thm:main} that was given in Sec.~\ref{sec:attainable}. First, observe that in all steps (i) - (vii) of the proof, we have shown that the system of equations \eqref{eq:key} could be solved by choosing suitable covariance parameters $(\sigma_1, \sigma_2, \rho)$ and state vectors $(z_1, z_2)$ and that it was not necessary to modify any of the remaining parameters in $\bm{P}'$. This shows that attainability can be strengthened to \emph{$\Sigma$-attainability} in all cases.\\
Next, observe that that in steps (i) - (v) of the proof we have used Theorem~\ref{thm:extremal} to find a D-polynomial $\varphi_+$ or $\varphi_-$, which, after solving \eqref{eq:key}, equates to $\partial_x f$, the derivative of the forward curve. Theorem~\ref{thm:extremal} allows us to predetermine all zeroes $r_1 < \dotsm < r_k$ of $\varphi_\pm$, and hence the locations of the extrema of the forward curve. The same is true for $\partial_x Y$, the derivative of the yield curve. This shows that in cases (i) -(v) we obtain \emph{strong} $\Sigma$-attainability. In addition, note that it was sufficient to choose $\rho = 0$ in all cases (i) - (v). Thus, we even get \emph{strong $\Sigma_{\rho = 0}$-attainability}. This completes the arguments needed for Cor.~\ref{cor:strong}.\\

The contents of Cor~\ref{cor:strict} follow from a perturbation argument. Consider for instance case (iii) in the proof of Thm.~\ref{thm:main}: There, we have shown that we can find parameters $\sigma_1 = \rho = 0$, $\sigma_2 > 0$ and a state vector $(z_1,z_2) \in \RR$, which produces the sign sequence $\sise{\pp \mm \pp}$ corresponding to shape $\hudi{}$. Suppose that a perturbation 
\[\sigma_1^\epsilon = \epsilon, \quad \rho^\epsilon = \pm \epsilon \quad \text{and} \quad z_1^\epsilon = z_1 \pm \epsilon, \quad z_2^\epsilon = z_2 \pm \epsilon\] 
with $\epsilon$ in some small set $[0,\delta)$ still produces the same sign sequence $\sise{\pp \mm \pp}$ and shape $\hudi{}$. Then, we may conclude
\begin{itemize}
\item that the shape $\hudi{}$ is strictly $\Sigma$-attainable, as $(Z_t^1, Z_t^2)$ visits any small neighborhood of $(z_1,z_2)$ with strictly positive probability; 
\item that $\hudi{}$ is also $\Sigma_{\rho > 0}$- and $\Sigma_{\rho < 0}$-attainable, as we have relaxed the condition $\rho = 0$ to $\rho^\epsilon = \pm \epsilon$; and
\item that it is sufficient to consider regular matrices $\Sigma$, as we have relaxed the condition $\sigma_1 = 0$ to $\sigma_1^\epsilon = \epsilon$.
\end{itemize}
The necessary perturbation Lemma is given below. Applying the same argument to each of the cases (i) - (v) in the proof yields part (a) and (b) of Cor.~\ref{cor:strict}. For cases (vi) and (vii) note that the Lemma can only be applied to the D-polynomial $\varphi_+$, but not to $\varphi_0$, which has a zero at the boundary of $[0,\infty)$ and is not an extremal D-polynomial. This yields part (c) of Cor.~\ref{cor:strict}.

\begin{lem}[Perturbation Lemma]
Let $\phi = \sum_{i=1}^n a_i \phi_i$ be a non-vanishing D-polynomial in a Descartes system $\cD = (\phi_1, \dotsc, \phi_n)$ on a subinterval $X \subset \RR$ which satisfies
\[\sseq\left(\sum_{i=1}^n a_i \phi_i\right) \simeq \sise{a_1 \dotsc a_n}\]
and has no zeroes on the boundary of $X$. Then there exist $(b_i)_{i=1\dotsc n} \in \set{-1,+1}$ and $\delta > 0$, such that
\[\sseq\left(\sum_{i=1}^n a_i^\epsilon \phi_i\right) \simeq \sseq\left(\sum_{i=1}^n a_i \phi_i\right)\]
for all $\epsilon \in [0,\delta)$ and with $a_i^\epsilon = a_i + \epsilon b_i$.
\end{lem}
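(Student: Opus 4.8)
The plan is to establish the equivalence $\sseq(\phi^\epsilon) \simeq \sseq(\phi)$, where $\phi^\epsilon := \sum_{i=1}^n a_i^\epsilon \phi_i$, by squeezing the number of sign changes of $\phi^\epsilon$ between a matching \emph{upper} and \emph{lower} bound: the upper bound comes from the variation-diminishing property of the Descartes system (Theorem~\ref{thm:vardim}) after the directions $b_i$ are chosen so as not to disturb the reduced coefficient sequence, and the lower bound comes from a continuity/bracketing argument that uses extremality and the absence of boundary zeros.

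First I would fix the perturbation directions. As $\phi$ is non-vanishing, at least one $a_i \neq 0$. For each index with $a_i \neq 0$ I set $b_i = \sign(a_i)$, so that $a_i^\epsilon = a_i + \epsilon\,\sign(a_i)$ keeps the sign of $a_i$ for every $\epsilon \ge 0$ (in fact any $b_i \in \{-1,+1\}$ works once $\delta < \min_{a_i\neq 0}|a_i|$, a freedom exploited in the applications). For each index with $a_i = 0$ I fill $b_i$ with the sign of the nearest nonzero coefficient to its left, or, for leading zeros, to its right. A short case check shows this rule leaves the pure reduction of the coefficient sequence unchanged, so that
\[\sise{a_1^\epsilon\,\dotsm\,a_n^\epsilon} \simeq \sise{a_1\,\dotsm\,a_n} \qquad \text{for all } \epsilon \in [0,\delta).\]
Applying Theorem~\ref{thm:vardim} then gives $\sseq(\phi^\epsilon) \subseteq \sise{a_1^\epsilon\,\dotsm\,a_n^\epsilon} \simeq \sise{a_1\,\dotsm\,a_n}$. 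By extremality of $\phi$ the right-hand side has exactly $k$ sign changes, where $k$ is the number of sign changes of $\sseq(\phi)$, so $\phi^\epsilon$ has \emph{at most} $k$ sign changes.

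For the lower bound I would use that $\phi$ is extremal and has no boundary zeros. Extremality forces $\phi$ to have exactly $k$ strong sign changes, at interior points $r_1 < \dotsm < r_k$ of $X$; the absence of boundary zeros lets me choose sample points $x_0 < x_1 < \dotsm < x_k$ in $X$ with $r_j \in (x_{j-1},x_j)$, with $x_0$ so near the left end and $x_k$ so near the right end that no sign change of $\phi$ lies beyond them, and with $\sign \phi(x_j) = (-1)^j\,S_\text{init}(\phi) \neq 0$. Since $\phi^\epsilon(x_j) = \phi(x_j) + \epsilon\sum_i b_i\phi_i(x_j)$ is affine in $\epsilon$ and $\phi(x_j)\neq 0$, shrinking $\delta$ guarantees $\sign \phi^\epsilon(x_j) = \sign\phi(x_j)$ for all $j$ and all $\epsilon \in [0,\delta)$. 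Hence $\phi^\epsilon$ alternates sign across the $k$ bracketing intervals, giving \emph{at least} $k$ strong sign changes together with the correct initial sign.

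Finally I would combine the bounds: $\sseq(\phi^\epsilon)$ is a pure sign sequence with at least $k$ and at most $k$ sign changes, lying (in the sense of $\subseteq$) inside the pure sequence $\sseq(\phi)$, which has exactly $k$ changes; a length-$(k+1)$ subsequence of a length-$(k+1)$ pure sequence must coincide with it, so $\sseq(\phi^\epsilon) \simeq \sseq(\phi)$ for every $\epsilon \in [0,\delta)$. The main obstacle, and the place where the hypotheses are genuinely needed, is the interplay of the two bounds: the upper bound requires the combinatorial fill of the zero coefficients to preserve the reduced coefficient sequence, while the lower bound requires the no-boundary-zero assumption to anchor the extreme sample points at the endpoints, ruling out a sign change slipping past $x_0$ or $x_k$ under perturbation — precisely the failure mode displayed by the non-extremal $f_0$ with a zero at $0$ in cases (vi)--(vii) of the proof of Theorem~\ref{thm:main}.
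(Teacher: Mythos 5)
Your proof is correct and follows essentially the same route as the paper's: choose the $b_i$ so that the reduced coefficient sequence is unchanged, get the upper bound from Theorem~\ref{thm:vardim} plus extremality, and get the lower bound by fixing finitely many alternating-sign sample points whose signs persist for small $\epsilon$ (your implicit $\delta$ versus the paper's explicit formula, and your left-neighbor zero-fill rule versus the paper's positive-neighbor rule, are immaterial differences). One minor quibble: the no-boundary-zero hypothesis is not what powers your lower bound — it is the variation-diminishing upper bound that rules out a sign change slipping in past $x_0$ or $x_k$ — but this affects only your closing commentary, not the validity of the argument.
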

\begin{proof}
First, we show that the sequence $(b_i)$ can be chosen such that 
\[\sise{a_1^\epsilon \dotsc a_n^\epsilon} \simeq \sise{a_1 \dotsc a_n}.\] 
To this end define $b_1, \dotsc, b_n$ as follows:
\begin{align*}
a_i  > 0 \quad &\Longrightarrow &\quad &b_i := +1\\
a_i < 0  \quad &\Longrightarrow &\quad &b_i := -1\\
a_i = 0 \quad &\Longrightarrow &\quad &b_i := \begin{cases}+1 \quad &\text{if the block of zeroes containing $a_i$ borders}\\&\;\text{on at least one $a_j > 0$,} \\-1 \quad &\text{else.}\end{cases}
\end{align*}
It is easy to see that the number and direction of strong sign changes in $(a_1^\epsilon, \dotsc, a_n^\epsilon)$ is the same as in $(a_1, \dotsc, a_n)$ for all $\epsilon \ge 0$, i.e., we have
\begin{equation*}
\sise{a_1^\epsilon \dotsc a_n^\epsilon} \simeq \sise{a_1 \dotsc a_n}, \quad \forall\,\epsilon \ge 0.
\end{equation*}
Set $\phi^\epsilon = \sum_{i=1}^n a_i^\epsilon \phi_i$. Then by Theorem~\ref{thm:vardim}
\begin{equation}\label{eq:sise_sub}
\sseq(\phi^\epsilon)  \subseteq \sise{a_1^\epsilon \dotsc a_n^\epsilon} \simeq \sise{a_1 \dotsc a_n} \simeq \sseq(\phi),
\end{equation}
for all $\epsilon \ge 0$, and we have shown that $\phi^\epsilon$ cannot have \emph{more} sign changes than $\phi$. It remains to show that equivalence holds for small enough $\epsilon$. Let $k$ be number of strong sign changes of $\phi$. Clearly, we can find $r_0, \dotsc, r_k$ such that the sequence $\phi(r_i)_{i=0, \dotsc, k}$ is of alternating signs. Each interval $(r_i, r_{i+1})$ must contain exactly one zero of $\phi$. Set
\[\delta := \frac{\min_{i=0, \dotsc, k} |\phi(r_i)|}{\sum_{j=1}^n \max_{i=0, \dotsc, k} |\phi_j(r_i)|}.\]
Then, $\delta > 0$ and for all $\epsilon \in [0,\delta)$
\begin{align*}
\left|1 - \frac{\phi^\epsilon(r_i)}{\phi(r_i)} \right| &= \left|\frac{\phi(r_i) - \phi^\epsilon(r_i)}{\phi(r_i)} \right| \le \frac{1}{|\phi(r_i)|} \left| \sum_{j=1}^n \epsilon b_j \phi_j(r_i) \right| \le \\ &\le \epsilon \frac{\sum_{j=1}^n |\phi_j(r_i)|}{|\phi(r_i)|} < 1.
\end{align*}
This shows that the sequence $\phi^\epsilon(r_i)_{i=0, \dotsc, k}$ has the same alternating signs as $\phi(r_i)_{i=0, \dotsc, k}$ and hence that $\phi^\epsilon$ has at least the same number of zeroes as $\phi$, for all $\epsilon \in [0,\delta)$. Together with \eqref{eq:sise_sub}, this completes the proof.
\end{proof}

\appendix

\section{Auxilliary results on Descartes systems}

Let a family $(\phi_1, \dotsc, \phi_k)$ of functions on $X \subseteq \RR$ be given. We set $\bm{x} = (x_1, \dotsc, x_k) \in X^k$ and 
\[\Delta_k(X) := \set{\bm{x} \in X^k: x_1 < \dotsc < x_k}.\] 
From \cite{borwein1995polynomials} we adopt the compact notation
\begin{equation}\label{eq:det}
D\begin{pmatrix}\phi_1, \dotsc ,\phi_k\\x_1, \dotsc, x_k\end{pmatrix} := \det \begin{pmatrix} \phi_{1}(x_1) & \phi_{2}(x_1) & \dotsc &  \phi_{k}(x_1)\\ \vdots & \vdots && \vdots\\  \phi_{1}(x_k) &  \phi_{2}(x_k) & \dotsc &  \phi_{k}(x_k)\end{pmatrix}.
\end{equation}
An important special case is the Vandermonde determinant, which for any real $(\gamma_i)_{i=1, \dotsc, k}$ evaluates as
\begin{equation}\label{eq:Vandermonde}
D\begin{pmatrix}1, x ,x^2, \dotsc, x^{k-1}\\ \gamma_1, \dotsc, \gamma_k\end{pmatrix} = \prod_{j=1}^{k-1} (\gamma_j - \gamma_{j-1}), 
\end{equation}
see e.g. \cite[Ch.~22.4]{hogben2013handbook}. For sufficiently differentiable functions $\phi_1, \dotsc, \phi_k$, we also introduce the Wronskian determinant (or simply Wronskian)
\begin{equation}\label{eq:wronskian}
W(\phi_1, \dotsc, \phi_k)(x) = \det \begin{pmatrix} 1 & \phi_1(x) & \phi_1'(x) & \dotsm & \phi_1^{(k)}(x) \\ 1 & \phi_2(x) & \phi_2'(x) & \dotsm & \phi_2^{(k)}(x) \\ \vdots & \vdots & \vdots & & \vdots\\ 1 & \phi_k(x) & \phi_k'(x) & \dotsm & \phi_k^{(k)}(x) \\\end{pmatrix}.
\end{equation}
In \cite[Ch.~2, \S 2]{karlin1968total} relations between the two determinants in \eqref{eq:det} and \eqref{eq:wronskian} as well as intermediate notions of `derivated determinants' are discussed. 

\subsection{D-polynomials with prescribed zeroes}\label{app:interpolation}

\begin{proof}[Proof of Theorem~\ref{thm:extremal}]
Let a Descartes system $\cD = (\phi_1, \dotsc, \phi_n)$ on $X$ and a set of prescribed zeroes $\bm{r} =(r_1, \dotsc, r_{n-1}) \in \Delta_{n-1}(X)$ be given. We show that the D-polynomial  
\begin{equation}\label{eq:interpol}
\phi(x;\bm{r}) = D\begin{pmatrix} \phi_1, &\phi_2, &\dots, &\phi_n\\ x, &r_1, &\dotsc ,&r_{n-1} \end{pmatrix} 
\end{equation}
is the desired interpolation polynomial of Theorem~\ref{thm:extremal}. First, observe that the determinant vanishes whenever $x = r_i$ for any $i = 1, \dotsc, n-1$, and hence $\phi(x,\bm{r})$ possesses a zero at each $r_i$, which shows (a). Second, as $\cD$ is a Descartes system, the determinant must be non-zero at all other points in $X$. The point $x$ crossing an interior zero $r_i$ changes the order of two columns in the determinant and hence flips the sign of $\phi(x;\bm{r})$, which shows (b). Claim (c) now follows from Theorem~\ref{thm:vardim} -- because $\phi(x;\bm{r})$ has $n-1$ sign changes, equivalence must hold in \eqref{eq:vardim}. 
\end{proof}
To prepare for additional results, we remark that the coefficients $a_1, \dotsc, a_n$ of the interpolation D-polynomial $\phi(x;\bm{r})$ can be determined directly from \eqref{eq:interpol}. Expanding the determinant in the first column yields 
\[\phi(x,\bm{r}) = \sum_{i=1}^n a_i(\bm{r}) \phi_i(x),\]
where 
\begin{equation}
\label{eq:coefficients}
a_i(\bm{r}) = (-1)^{1+i} D\begin{pmatrix}\phi_1, \dotsc, \phi_{i-1}, \; \phi_{i+1}, \dotsc, \phi_n \\ r_1, \dotsc, r_{n-1}\end{pmatrix}.
\end{equation}
Because $(\phi_1, \dotsc, \phi_n)$ is a Descartes system, the determinant on the right hand side is strictly positive. This shows that the coefficients of $\phi(x,\bm{r})$ must have alternating signs, starting with $\pp$.

\subsection{The Descartes property of $\cE$} \label{app:Descartes_E}

\begin{proof}[Proof of Lemma~\ref{lem:Descartes_g}]
To show that $\cE_\text{sep}, \cE_\text{prox}$ and $\cE_\text{crit}$ are Descartes systems on $[0,\infty)$, it is sufficient to show that
\[D\begin{pmatrix}g_{\alpha_k}, \dotsc, g_{\alpha_1}\\ x_1, \dotsc, x_k\end{pmatrix} > 0\]
for any $\alpha_k > \dotsc > \alpha_1 \ge 0$ and $\bm{x} = (x_1, \dotsc, x_k) \in \Delta_{k-1}[0,\infty)$.  Our starting point is the representation \eqref{eq:g_rep} of $g_\alpha$ as an integral of $\varphi_\alpha(x) = e^{-\alpha x}$ with respect to the totally positive kernel
\[K(x,y) = \frac{y}{x^2}\Ind{x \le y}.\]
From \cite[Ch.~3, Eq.~(1.11)ff]{karlin1968total} and with $K_i := K(x,y_i)$ we obtain that
\[D\begin{pmatrix}K_1, \dotsc, K_k\\ x_1, \dotsc, x_k\end{pmatrix}  = \frac{y_1 \dotsm y_{k}}{x_1^2 \dotsm x_k^2} \Ind{0 \le y_1 \le x_1 \le y_2 \le x_2 \dotsm \le x_k}\]
for any $\bm{x}, \bm{y} \in \Delta_k(0,\infty)$. Combining this with the composition formula \cite[Ch.~3, Eq.~(1.2)]{karlin1968total} we obtain
\begin{multline}
D\begin{pmatrix}g_{\alpha_k}, \dotsc, g_{\alpha_1}\\ x_1, \dotsc, x_k\end{pmatrix} = \\ = \int_0^{x_1} \int_{x_1}^{x_2} \dotsm \int_{x_{k-1}}^{x_k} \frac{y_1  \dotsm y_{k}}{x_1^2 \dotsm x_k^2} D\begin{pmatrix}\varphi_{\alpha_k}, \dotsc, \varphi_{\alpha_1}\\ x_1, \dotsc, x_k\end{pmatrix} \,dy_1 dy_2 \dotsm dy_k.
\end{multline}
Because $(\varphi_{\alpha_k}, \dotsc, \varphi_{\alpha_1})$ is a Descartes system, the integrand is strictly positive. Moreover, the domain of integration has strictly positive measure. We conclude that the left hand side is strictly positive for any $\bm{x} = (x_1, \dotsc, x_k) \in \Delta_k(0,\infty)$, and hence that $\cE$ is a Descartes system on $(0,\infty)$. It remains to extend this property to the left-closed interval $[0,\infty)$. By \cite[Ch.~2, Thm.~2.3]{karlin1968total} it is sufficient to show the Wronskian $W(g_{\alpha_k}, \dotsc, g_{\alpha_1})(0)$ is strictly positive for any $k$. We first calculate the Taylor expansion 
\[g_\alpha(x) = \frac{1}{x^2} \int_0^x y e^{-\alpha y} dy = \sum_{k=0}^\infty \frac{(-\alpha)^k}{(k +2)} \frac{x^k}{k!},\]
which follows from the Taylor expansion of the exponential function. We conclude that the $k$-th derivative of $g_\alpha$ at zero is given by
\begin{equation}\label{eq:g_der}
g_\alpha^{(k)}(0) = \frac{(-\alpha)^k}{k+2}.
\end{equation}
Thus we obtain that the Wronskian at zero is given by 
\begin{equation}
\label{eq:g_Wronskian}
W(g_{\alpha_k}, \dotsc, g_{\alpha_1})(0) = (k+1)!^{-k} D\begin{pmatrix}1, x, x^2 \dotsc, x^{k-1}\\ -\alpha_k, \dotsc, -\alpha_{1}\end{pmatrix}.
\end{equation}
The latter is a Vandermonde determinant, which evaluates to $\prod_{j=1}^{k-1}(\alpha_{j+1} - \alpha_j)$ and is therefore strictly positive.
\end{proof}

\subsection{Further results on interpolation polynomials}\label{app:interpolation2}

\begin{lem}\label{lem:coef_ratio}
Let $\alpha_n > \dotsc > \alpha_1 \ge 0$ be given and consider the Descartes system
\[\cD = (\varphi_{\alpha_n}, \dotsc, \varphi_{\alpha_1}), \quad \text{where} \quad \varphi_\alpha(x) = e^{-\alpha x}.\]
Let $f(x,\bm{r}) = \sum_{i=1}^n a_i(\bm{r}) \varphi_{\alpha_i}(x)$ be the interpolation D-polynomial \eqref{eq:interpol} of $\bm{r} \in \Delta_{n-1}$. Then
its coefficients satisfy, for any $i, j \in \set{1, \dotsc, n}$,
\begin{equation}\label{eq:coef_asymp}
\lim_{\bm{r} \to \bm{0}} \frac{a_i(\bm{r})}{a_j(\bm{r})}  = (-1)^{(i-j)}\frac{\alpha_{i+1} - \alpha_{i-1}}{(\alpha_{i+1} - \alpha_{i})(\alpha_{i} - \alpha_{i-1})} \frac{(\alpha_{j+1} - \alpha_{j})(\alpha_{j} - \alpha_{j-1})} {\alpha_{j+1} - \alpha_{j-1}}
\end{equation}
with the convention that terms containing $\alpha_0$ or $\alpha_{n+1}$ shall be omitted. The same result holds for $\cD$ replaced with 
\[\cE = (g_{\alpha_n}, \dotsc, g_{\alpha_1}), \quad \text{where} \quad g_\alpha(x) = \frac{1}{x^2}\int_0^x y e^{-\alpha y} dy.\]
\end{lem}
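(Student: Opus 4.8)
The plan is to reduce the ratio of coefficients to a ratio of two $(n-1)\times(n-1)$ sub-determinants that share the same evaluation points $\bm{r}$, and then to pass to the coalescent limit $\bm{r}\to\bm{0}$, in which each sub-determinant degenerates into a Wronskian at $0$. Since $\cD=(f_{\alpha_n},\dotsc,f_{\alpha_1})$ is a Descartes system, I may use the explicit formula \eqref{eq:coefficients} for the coefficients of the interpolation $D$-polynomial. Because $f_{\alpha_i}$ occupies position $n-i+1$ in $\cD$, that formula gives $a_i(\bm{r})=(-1)^{n-i}D_i(\bm{r})$, where
\[D_i(\bm{r}) := D\begin{pmatrix} f_{\alpha_n}, \dotsc, \widehat{f_{\alpha_i}}, \dotsc, f_{\alpha_1} \\ r_1, \dotsc, r_{n-1}\end{pmatrix},\]
the hat denoting omission and $D(\cdot)$ being the determinant notation \eqref{eq:det}. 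Hence $\frac{a_i(\bm{r})}{a_j(\bm{r})}=(-1)^{i-j}\frac{D_i(\bm{r})}{D_j(\bm{r})}$, and the entire task is to evaluate $\lim_{\bm{r}\to\bm{0}}D_i(\bm{r})/D_j(\bm{r})$.

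As $\bm{r}\to\bm{0}$ all $n-1$ evaluation points coalesce at $0$, so I would invoke the coalescent (confluent) limit relating the determinant \eqref{eq:det} to the Wronskian, as discussed in \cite[Ch.~2, \S 2]{karlin1968total}: for a smooth system $(\psi_1,\dotsc,\psi_{n-1})$,
\[D\begin{pmatrix}\psi_1,\dotsc,\psi_{n-1}\\ r_1,\dotsc,r_{n-1}\end{pmatrix} = \frac{\prod_{1\le s<t\le n-1}(r_t - r_s)}{\prod_{s=0}^{n-2}s!}\,W(\psi_1,\dotsc,\psi_{n-1})(0)\,\bigl(1 + o(1)\bigr).\]
Applying this with $\psi=(f_{\alpha_n},\dotsc,\widehat{f_{\alpha_i}},\dotsc,f_{\alpha_1})$, the Vandermonde factor $\prod_{s<t}(r_t-r_s)$ and the factorial normalization are the same for every $i$, so they cancel in the quotient. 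Thus
\[\lim_{\bm{r}\to\bm{0}}\frac{a_i(\bm{r})}{a_j(\bm{r})} = (-1)^{i-j}\frac{W_i}{W_j},\qquad W_i := W\bigl(f_{\alpha_n},\dotsc,\widehat{f_{\alpha_i}},\dotsc,f_{\alpha_1}\bigr)(0).\]

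Since $f_\alpha^{(m)}(0)=(-\alpha)^m$, each $W_i$ is a Vandermonde determinant in the numbers $\{-\alpha_k : k\ne i\}$, which by \eqref{eq:Vandermonde} equals, up to a sign, the product of the pairwise differences $\alpha_p-\alpha_q$ taken over all retained pairs. The ratio $W_i/W_j$ is therefore a quotient of two such products, in which the factors common to both should cancel, leaving the expression in \eqref{eq:coef_asymp}. Carrying out this cancellation explicitly — and in particular pinning down the boundary convention under which the terms containing $\alpha_0$ or $\alpha_{n+1}$ drop out when $i$ or $j$ lies in $\{1,n\}$ — is the one genuinely computational step, and the place where I would be most careful; this bookkeeping of the Vandermonde quotient is the main obstacle.

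For $\cE=(g_{\alpha_n},\dotsc,g_{\alpha_1})$ the same argument goes through verbatim, the only change being that $f_\alpha^{(m)}(0)=(-\alpha)^m$ is replaced by $g_\alpha^{(m)}(0)=(-\alpha)^m/(m+2)$ from \eqref{eq:g_der}. The extra factor $1/(m+2)$ depends only on the derivative order $m$, i.e. only on the column of the Wronskian matrix, so it factors out of each $W_i$ as the same constant $\prod_{m}(m+2)^{-1}$ and cancels in the ratio $W_i/W_j$. Hence the limit is identical for $\cE$, which yields the final assertion of the lemma.
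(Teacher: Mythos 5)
Your route is the same as the paper's: express $a_i(\bm{r})$ as a signed minor via \eqref{eq:coefficients}, pass to the coalescent limit $\bm{r}\to\bm{0}$ so that the ratio of minors becomes a ratio of Wronskians at $0$ (the step the paper attributes to \cite[Ch.~6, Eqs.~(1.3), (1.4)]{karlin1968total}), and then evaluate each Wronskian as a Vandermonde determinant in the numbers $-\alpha_k$. Your sign bookkeeping $a_i/a_j=(-1)^{i-j}D_i/D_j$ and your reduction for $\cE$ (the column factors $1/(m+2)$ from \eqref{eq:g_der} cancel in the ratio) are both correct and match the paper.

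However, the one step you defer -- ``carrying out this cancellation explicitly'' -- is a genuine gap, and in fact it cannot be completed as you assert, because the all-pairs Vandermonde ratio does \emph{not} reduce to \eqref{eq:coef_asymp}. Writing $W_i=\prod_{k<l,\,k,l\neq i}(\alpha_l-\alpha_k)$ and cancelling the pairs avoiding both $i$ and $j$ leaves
\[
\lim_{\bm{r}\to\bm{0}}\frac{a_i(\bm{r})}{a_j(\bm{r})}=(-1)^{i-j}\,\frac{\prod_{k\neq i,j}\left|\alpha_k-\alpha_j\right|}{\prod_{k\neq i,j}\left|\alpha_k-\alpha_i\right|},
\]
a quotient retaining a factor from \emph{every} index $k$, not only from the neighbours $i\pm1$, $j\pm1$; it coincides with \eqref{eq:coef_asymp} only for $n\le 3$. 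Concretely, for $n=4$, $(\alpha_1,\alpha_2,\alpha_3,\alpha_4)=(0,1,2,4)$, $i=1$, $j=2$, the true limit (confirmed by direct expansion of the two $3\times 3$ minors) is $-3/8$, whereas \eqref{eq:coef_asymp} gives $-1/2$. So your method is sound but the target identity is false for $n\ge 4$: the paper's own proof reaches \eqref{eq:coef_asymp} only because it evaluates the Wronskian with the misstated Vandermonde identity \eqref{eq:Vandermonde}, which takes the product of \emph{consecutive} differences rather than of all pairwise differences. Your instinct that this cancellation is ``the place where I would be most careful'' is exactly right: carried out honestly, it shows the lemma must be restated with the displayed all-pairs quotient. (The correction is harmless downstream: applying it in Lemma~\ref{lem:special} gives the limit $\sqrt{2\lambda_1/\lambda_2-1}<1$, so the bound \eqref{eq:coef_ineq}, and everything built on it, survives.)
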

\begin{proof}
Combining \eqref{eq:coefficients} with \cite[Ch.~6, Eqs.(1.3), (1.4)]{karlin1968total}, we obtain
\begin{align*}
\lim_{\bm{r} \to \bm{0}} \frac{a_i(\bm{r})}{a_j(\bm{r})} &= (-1)^{(i-j)} \lim_{\bm{r} \to \bm{0}}  \frac{
D\begin{pmatrix}\varphi_n, \dotsc, \varphi_{i+1}, \; \varphi_{i-1}, \dotsc, \varphi_1 \\ r_1, \dotsc, r_{n-1}\end{pmatrix}
}{
D\begin{pmatrix}\varphi_n, \dotsc, \varphi_{j+1}, \; \varphi_{j-1}, \dotsc, \varphi_1 \\ r_1, \dotsc, r_{n-1}\end{pmatrix}
} = \\
&= (-1)^{(i-j)} \frac{
W\left(\varphi_n, \dotsc, \varphi_{i+1}, \; \varphi_{i-1}, \dotsc, \varphi_1 \right)(0)
}{
W\left(\varphi_n, \dotsc, \varphi_{j+1}, \; \varphi_{j-1}, \dotsc, \varphi_1 \right)(0).
} 
\end{align*}
As $\varphi_\alpha(x) = e^{-\alpha x}$, the Wronskian determinants become Vandermonde determinants, i.e.
\begin{align*}
&W\left(\varphi_n, \dotsc, \varphi_{i+1}, \; \varphi_{i-1}, \dotsc, \varphi_1 \right)(0) = D\begin{pmatrix}1, x ,x^2, \dotsc, x^{n-1}\\ -\alpha_n, \dotsc, -\alpha_{i+1}, \; -\alpha_{i-1}, \dotsc, -\alpha_1 \end{pmatrix} = \\ &\qquad \frac{\alpha_{i+1} - \alpha_{i-1}}{(\alpha_{i+1} - \alpha_{i})(\alpha_{i} - \alpha_{i-1})} \prod_{k=1}^{n-1} (\alpha_k - \alpha_{k-1}),
\end{align*}
and similarly for $j$. Evaluating their ratio, \eqref{eq:coef_asymp} is obtained. For $g$ the proof is analogous, using \eqref{eq:g_Wronskian} to evaluate the Wronskians.
\end{proof}

\begin{lem}\label{lem:special}
Consider the Descartes system $\cD_{4,\text{prox}}  = (\varphi_{2\lambda_2}, \varphi_{\lambda_2 + \lambda_1}, \varphi_{2\lambda_1}, \varphi_{\lambda_2})$ on $[0,\infty)$. There exists a neighborhood $N$ of $\bm{0}$ in $[0,\infty)^3$, such that the coefficients of the interpolation D-polynomial 
\[f(x;\bm{r}) = a_{2\lambda_2}(\bm{r})\varphi_{2\lambda_2}(x) +  a_{\lambda_1 + \lambda_2}(\bm{r})\varphi_{\lambda_1 + \lambda_2}(x) + a_{2\lambda_1}(\bm{r})\varphi_{2\lambda_1}(x) + a_{\lambda_2}(\bm{r})\varphi_{\lambda_2}(x)\]
satisfy
\begin{equation}\label{eq:coef_ineq}
\left| \frac{a_{\lambda_1 + \lambda_2}(\bm{r})}{\sqrt{a_{2 \lambda_1}(\bm{r}) a_{2 \lambda_2}(\bm{r})}}\right| < 2 \qquad \forall \,\bm{r} \in N \cap \Delta_3[0,\infty).
\end{equation}
The same holds for $\cD, \cE_{4,\text{prox}}$ and $\cE$.
\end{lem}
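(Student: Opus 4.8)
The plan is to reduce the uniform bound \eqref{eq:coef_ineq} to a single limit computation as $\bm{r} \to \bm{0}$, which can be carried out explicitly with Lemma~\ref{lem:coef_ratio}, and then to upgrade the limit to a neighborhood statement by continuity. First I would write the exponents of $\cD_{4,\text{prox}}$ in decreasing order as $\alpha_4 = 2\lambda_2 > \alpha_3 = \lambda_1 + \lambda_2 > \alpha_2 = 2\lambda_1 > \alpha_1 = \lambda_2$; here the ordering $\alpha_2 > \alpha_1$ is exactly the scale-proximity hypothesis $2\lambda_1 > \lambda_2$, while $\alpha_4 > \alpha_3 > \alpha_2$ and $\alpha_3 > \alpha_1$ follow from $\lambda_1 < \lambda_2$. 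In this labelling the three coefficients in \eqref{eq:coef_ineq} are $a_{2\lambda_2} = a_4$, $a_{\lambda_1+\lambda_2} = a_3$ and $a_{2\lambda_1} = a_2$, so it suffices to bound $a_3^2/(a_2 a_4)$.

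Before taking limits I would record the sign information. By the alternating-sign property of interpolation coefficients noted after \eqref{eq:coefficients}, for every $\bm{r} \in \Delta_3[0,\infty)$ the coefficients $a_i(\bm{r})$ are nonzero with $\sign(a_i) = (-1)^{4-i}$; in particular $a_2, a_4 > 0$, hence $a_2 a_4 > 0$ and $a_3^2/(a_2 a_4)$ is a well-defined, strictly positive, continuous function of $\bm{r}$ on $\Delta_3[0,\infty)$. Thus it is enough to show that this function stays strictly below $4$ near $\bm{0}$.

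The key computation is then to write $a_3^2/(a_2 a_4) = (a_3/a_2)(a_3/a_4)$ and apply \eqref{eq:coef_asymp} twice, using the boundary convention for the $j=4$ factor since $4 = n$. Substituting the gaps $\alpha_3 - \alpha_2 = \alpha_4 - \alpha_3 = \lambda_2 - \lambda_1$, $\alpha_2 - \alpha_1 = 2\lambda_1 - \lambda_2$, $\alpha_3 - \alpha_1 = \lambda_1$ and $\alpha_4 - \alpha_2 = 2(\lambda_2 - \lambda_1)$ yields
\[
\lim_{\bm{r} \to \bm{0}}\frac{a_3}{a_2} = -\frac{2(2\lambda_1 - \lambda_2)}{\lambda_1}, \qquad \lim_{\bm{r} \to \bm{0}}\frac{a_3}{a_4} = -2,
\]
and therefore
\[
\lim_{\bm{r} \to \bm{0}}\frac{a_3^2}{a_2 a_4} = \frac{4(2\lambda_1 - \lambda_2)}{\lambda_1},
\]
which is strictly less than $4$ precisely because $2\lambda_1 - \lambda_2 < \lambda_1 \iff \lambda_1 < \lambda_2$. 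Equivalently, $\lim_{\bm{r} \to \bm{0}} |a_{\lambda_1+\lambda_2}/\sqrt{a_{2\lambda_1}a_{2\lambda_2}}| = 2\sqrt{(2\lambda_1 - \lambda_2)/\lambda_1} < 2$. Since $a_3^2/(a_2 a_4)$ is continuous on $\Delta_3[0,\infty)$ and its limit at $\bm{0}$ lies strictly below $4$, there is a neighborhood $N$ of $\bm{0}$ in $[0,\infty)^3$ on which $a_3^2/(a_2 a_4) < 4$; taking square roots gives \eqref{eq:coef_ineq} on $N \cap \Delta_3[0,\infty)$.

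Finally, for the three remaining systems the computation is essentially unchanged. Lemma~\ref{lem:coef_ratio} asserts that the coefficient-ratio limits for the $g_\alpha$-systems $\cE_{4,\text{prox}}$ and $\cE$ coincide with those for the corresponding $f_\alpha$-systems, so the displayed limits, and hence the bound, carry over verbatim. For the five-element systems $\cD = \cD_\text{prox}$ and $\cE = \cE_\text{prox}$ the three exponents $2\lambda_1 < \lambda_1 + \lambda_2 < 2\lambda_2$ remain consecutive with the same immediate neighbours, and $2\lambda_2$ is still the maximal exponent; since \eqref{eq:coef_asymp} depends only on a node together with its nearest neighbours (and on the top-boundary convention), the limit of $a_{\lambda_1+\lambda_2}^2/(a_{2\lambda_1}a_{2\lambda_2})$ is again $4(2\lambda_1 - \lambda_2)/\lambda_1 < 4$, and the continuity argument applies with a neighborhood $N$ in $[0,\infty)^4$. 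I expect the main obstacle to be purely organisational rather than conceptual: matching the generic indices of Lemma~\ref{lem:coef_ratio} to the concrete exponents, applying the boundary convention at the top index correctly, and verifying that the clean threshold $4$ is cleared \emph{with room to spare} — which is exactly what the structural ordering $\lambda_1 < \lambda_2$ (together with $2\lambda_1 > \lambda_2$, guaranteeing positivity of the radicand) provides.
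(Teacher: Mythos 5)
Your proposal is correct and follows essentially the same route as the paper's own proof: apply Lemma~\ref{lem:coef_ratio} to compute $\lim_{\bm{r}\to\bm{0}} a_{\lambda_1+\lambda_2}/a_{2\lambda_1} = -2(2-\lambda_2/\lambda_1)$ and $\lim_{\bm{r}\to\bm{0}} a_{\lambda_1+\lambda_2}/a_{2\lambda_2} = -2$, combine them to get the limit $2\sqrt{2-\lambda_2/\lambda_1} \in (0,2)$, and extend to a neighborhood by continuity of the coefficients in $\bm{r}$. Your additional observations — the alternating-sign argument guaranteeing $a_{2\lambda_1}a_{2\lambda_2} > 0$, and the explicit check that the five-element systems have the same nearest-neighbour structure so the limits carry over — are details the paper leaves implicit in its closing ``analogous'' remark, and they are handled correctly.
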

\begin{proof}
Applying Lemma~\ref{lem:coef_ratio} to $\cD_{4,\text{prox}}$, we calculate the limits
\begin{align*}
\lim_{\bm{r} \to \bm{0}} \left|\frac{a_{\lambda_1 + \lambda_2}(\bm{r})}{a_{\lambda_1}(\bm{r})}\right| &= 2 \left(2 - \frac{\lambda_2}{\lambda_1}\right)\\
\lim_{\bm{r} \to \bm{0}} \left|\frac{a_{\lambda_1 + \lambda_2}(\bm{r})}{a_{\lambda_2}(\bm{r})}\right| &= 2
\end{align*}
Taking square roots and multiplying, we obtain
\[\lim_{\bm{r} \to \bm{0}}\left| \frac{a_{\lambda_1 + \lambda_2}(\bm{r})}{\sqrt{a_{2 \lambda_1}(\bm{r}) a_{2 \lambda_2}(\bm{r})}}\right| = 2 \sqrt{2 - \frac{\lambda_2}{\lambda_1}}.\]
As $\lambda_1 < \lambda_2 < 2 \lambda_1$, the right hand side is contained strictly between $0$ and $2$. Due to \eqref{eq:coefficients}, the coefficients of the interpolation D-polynomial $f(x;\bm{r})$ depend continuously on $\bm{r} \in \Delta_3[0,\infty)$, and \eqref{lem:coef_ratio} follows. The proof for $\cD, \cE_{4,\text{prox}}$ and $\cE$ is analogous.
\end{proof}

%

\bibliographystyle{alpha}
\bibliography{references}

\begin{thebibliography}{CIJR85}

\bibitem[And87]{ando1987totally}
Tsuyoshi Ando.
\newblock Totally positive matrices.
\newblock {\em Linear algebra and its applications}, 90:165--219, 1987.

\bibitem[BE95]{borwein1995polynomials}
Peter Borwein and Tam{\'a}s Erd{\'e}lyi.
\newblock {\em Polynomials and polynomial inequalities}, volume 161.
\newblock Springer Science \& Business Media, 1995.

\bibitem[BM07]{brigo2007interest}
Damiano Brigo and Fabio Mercurio.
\newblock {\em Interest rate models-theory and practice: with smile, inflation
  and credit}.
\newblock Springer Science \& Business Media, 2007.

\bibitem[CIJR85]{cox1985theory}
John~C Cox, Jonathan~E Ingersoll~Jr, and Stephen~A Ross.
\newblock A theory of the term structure of interest rates.
\newblock {\em Econometrica}, 53(2):385--408, 1985.

\bibitem[DK19]{diez2019yield}
Franziska Diez and Ralf Korn.
\newblock Yield curve shapes of vasicek interest rate models, measure
  transformations and an application for the simulation of pension products.
\newblock {\em European Actuarial Journal}, pages 1--30, 2019.

\bibitem[DS00]{dai2000specification}
Qiang Dai and Kenneth~J Singleton.
\newblock Specification analysis of affine term structure models.
\newblock {\em The Journal of Finance}, 55(5):1943--1978, 2000.

\bibitem[Hog13]{hogben2013handbook}
Leslie Hogben.
\newblock {\em Handbook of linear algebra}.
\newblock Chapman and Hall/CRC, 2013.

\bibitem[HW90]{hull1990pricing}
John Hull and Alan White.
\newblock Pricing interest-rate-derivative securities.
\newblock {\em The review of financial studies}, 3(4):573--592, 1990.

\bibitem[Kar68]{karlin1968total}
Samuel Karlin.
\newblock {\em Total positivity}, volume~1.
\newblock Stanford University Press, 1968.

\bibitem[Kij02]{kijima2002monotonicity}
Masaaki Kijima.
\newblock Monotonicity and convexity of option prices revisited.
\newblock {\em Mathematical Finance}, 12(4):411--425, 2002.

\bibitem[KK13]{korn2013optionsbewertung}
Ralf Korn and Elke Korn.
\newblock {\em Optionsbewertung und Portfolio-Optimierung: Moderne Methoden der
  Finanzmathematik}.
\newblock Springer-Verlag, 2013.

\bibitem[KR18]{keller-ressel2018correction}
Martin Keller-Ressel.
\newblock Correction to: Yield curve shapes and the asymptotic short rate
  distribution in affine one-factor models.
\newblock {\em Finance and Stochastics}, 22(2):503--510, 2018.

\bibitem[KRS08]{keller-ressel2008yield}
Martin Keller-Ressel and Thomas Steiner.
\newblock Yield curve shapes and the asymptotic short rate distribution in
  affine one-factor models.
\newblock {\em Finance and Stochastics}, 12(2):149 -- 172, 2008.

\bibitem[KS66]{karlin1966tchebycheff}
Samuel Karlin and William~J Studden.
\newblock {\em Tchebycheff systems: with applications in analysis and
  statistics}.
\newblock Interscience, 1966.

\bibitem[LP07]{lord2007level}
Roger Lord and Antoon Pelsser.
\newblock Level--slope--curvature--fact or artefact?
\newblock {\em Applied Mathematical Finance}, 14(2):105--130, 2007.

\bibitem[SS06]{salinelli2006correlation}
Ernesto Salinelli and Carlo Sgarra.
\newblock Correlation matrices of yields and total positivity.
\newblock {\em Linear algebra and its applications}, 418(2-3):682--692, 2006.

\bibitem[Vas77]{vasicek1977equilibrium}
Oldrich Vasi{\v{c}}ek.
\newblock An equilibrium characterization of the term structure.
\newblock {\em Journal of Financial Economics}, 5:177--188, 1977.

\end{thebibliography}
\end{document}